\theoremstyle{definition}
\newtheorem{definition}{Definition}
\newtheorem{theorem}{Theorem}
\newtheorem{lemma}{Lemma}
\newtheorem{proposition}{Proposition}
\newcommand{\eps}{\varepsilon}
\newcommand{\E}{\mathrm{E}}
\newcommand{\B}{\{0,1\}}
\newcommand{\Bn}{\{0,1\}^n}
\newcommand{\Bm}{\{0,1\}^m}
\newcommand{\rf}[1]{{\ref{#1}}}
\newcommand{\OO}{\tilde{O}}
\newcommand{\h}{\mathsf{h}}
\begin{document}

\title{The Relative Exponential Time Complexity of Approximate Counting Satisfying Assignments}

\author{Patrick Traxler\thanks{This work was partially done during the authors doctoral studies at ETH Zurich \cite{Traxler10} and supported by the Swiss National Science Foundation SNF under project 200021-118001/1.}}


\maketitle

\begin{abstract} 
We study the exponential time complexity of approximate counting satisfying assignments of CNFs. We reduce the problem to deciding satisfiability of a CNF. Our reduction preserves the number of variables of the input formula and thus also preserves the exponential complexity of approximate counting.

Our algorithm is also similar to an algorithm which works particular well in practice for which however no approximation guarantee was known. Towards an analysis of our reduction we provide a new inequality similar to the Bonami-Beckner hypercontractive inequality. 



\end{abstract}

\section{Introduction}

We analyze the approximation ratio of an algorithm for approximately counting solutions of a CNF. The idea of our algorithm goes back to Stockmeyer. Stockmeyer \cite{Stockmeyer85} shows that approximately counting witnesses of any NP-relation is possible in randomized polynomial time given access to a $\Sigma_2$P-oracle. It is known that we only need an NP-oracle if we apply the Left-Over Hashing Lemma of Impagliazzo, Levin, and Luby \cite{ImpagliazzoLL89} which we discuss below. The use of an NP-oracle is necessary, unless P = NP. Stockmeyer's result and its improvement provides us with a first relation between deciding satisfiability and approximately counting solutions, a seemingly harder problem. 

\subsection{Exponential Time Complexity}


The motivation of our results comes from exponential time complexity. Impagliazzio, Paturi, and Zane \cite{ImpagliazzoPZ01} develop a structural approach to classify NP-complete problems according to their exact time complexity. They formulate and prove the Sparsification Lemma for $k$-CNFs. This lemma allows us to use almost all known polynomial time reductions from the theory of NP-completeness to obtain exponential hardness results. There are however problems for which the sparsification lemma and standard NP-reductions do not yield meaningfull results. Relating the exact complexity of approximately counting CNF solutions and the complexity of SAT is such a problem. We show: 

\emph{Let $c>0$ and assume there is an algorithm for SAT with running time $\OO(2^{cn})$. For any $\delta>0$, there is an algorithm which outputs with high probability in time  $\OO(2^{(c+\delta)n})$ the approximation $\tilde{s}$ for the number of solutions $s$ of an input CNF such that $$(1-2^{-\alpha n})\,s\leq \tilde{s}\leq (1+2^{-\alpha n})\,s$$ with $\alpha=\Omega(\frac{\delta^2}{\log(\frac{1}{\delta})})$.} 

It is not clear if this approximation problem is in $BPP^{NP}$ because of the super-polynomially small approximation error. An improvement of the approximation error would yield a similar reduction from $\#$SAT to SAT.

A further application of our algorithm is to sample a solution approximately uniformly from the set of all solutions \cite{JerrumVV86}. The approximation error is again subexponentially small in $n$. The reduction in \cite{JerrumVV86} preserves the number of variables. 

We can get also a result similar to Stockmeyer's result. For any problem in parameterized SNP \cite{ImpagliazzoPZ01} -- an appropriate refinement and subset of NP -- we can define its counting version. Every such problem reduces by our result and the sparsification lemma to SAT at the expense of an increase of $n$ to $O(n)$ variables. Here, $n$ may be the number of vertices in the graph coloring problem or a similar parameter \cite{ImpagliazzoPZ01}. We just have to observe that the sparsification lemma preserves the number of solutions. 



\subsection{A Practical Algorithm}

Stockmeyer's idea was implemented in \cite{GomesSS06}. Gomes et al.\ \cite{GomesSS06} provide an implementation of a reduction which uses a SAT-solver to answer oracle queries. The algorithm of Gomes et al.\ \cite{GomesSS06} is almost the same as our algorithm. It preserves the number of variables  and the maximum clause width is small. These properties seem to be crucial for a fast implementation, in particular, for the SAT-solver to work fast. 





Gomes et al.\ \cite{GomesSS06} compare empirically the running time of their algorithm to the running time of exact counting algorithms. Their algorithm performs well on the tested hard instances and actually outperforms exact counting algorithms. The output values seem to be good approximations. The reason for this is not understood by theoretical means yet. A bound on the approximation ratio is not known.

Because there are only small differences between our algorithm and the algorithm of Gomes et al.\ \cite{GomesSS06}, our bound on the approximation guarantee may be considered as a theoretical justification for the quality of the algorithm of Gomes et al.\ \cite{GomesSS06}. We do not attempt here to explain why the SAT-solver is able to handle the generated instances well.

Another algorithm for the $k$-CNF case with theoretical bounds was proposed by Thurely \cite{Thurely12}.



\subsection{Comparison to the Left-Over Hashing Lemma}

A possible reduction from approximate counting to satisfiability testing works roughly as follows. We assume to have a procedure which takes as input a CNF $F$ with $n$ variables and a parameter $m$. It outputs a CNF $F\wedge G_m$ such that the number of solutions of $F\wedge G_m$ times $2^m$ is approximately the number of solutions of $F$. We apply this procedure for $m=1,...,n$ and stop as soon as $F\wedge G_m$ is unsatisfiable. Using the information when the algorithm stops we can get a good approximation. 

The construction of $G_m$ reduces to the following randomness extraction problem. We have given a random point $x\in\Bn$ and want a function $h:\Bn\rightarrow\Bm$ such that $h(x)$ is almost uniform. We think of $h$ as $m$ functions $(h_1,...,h_m)$ and additionally require that each $h_i$ depends only on few coordinates. We use the later property to efficiently encode $h$ as a CNF in such a way that the encoding and the input CNF $F$ have the same number of variables. Stockmeyer's result and its improvement can not be adapted easily to get such an efficient encoding. The crucial difference of our approach to the original approach are the bounds on the locality of the hash function. Our analysis is Fourier-analytic whereas the proof of Left-Over Hashing Lemma \cite{ImpagliazzoLL89} uses probabilistic techniques. 





Impagliazzo et al.\ \cite{ImpagliazzoLL89} show that any pairwise independent\footnote{Pairwise independence means here that $\Pr_{h\sim\mathcal{H}_\text{ind}}(h(x_1)=y_1,h(x_2)=y_2) = 2^{-2m}$ for any $x_1,x_2\in\Bn$, $x_1\not=x_2$, and $y_1,y_2\in\Bm$. A Bernoulli matrix with bias $\frac{1}{2}$ induces a for example a pairwise independent family.} family $\mathcal{H}_\text{ind}$ of functions of the form $\Bn\rightarrow\Bm$ satisfies the following \emph{extraction property}: Fix a distribution $f$ over the cube $\Bn$ with bounded min-entropy\footnote{See Sec.\ \rf{section:preliminaries}.} $\Omega(m+\log(1/\eps))$ and $y\in\Bm$. Then,  
\begin{align*}
	\Pr_{h\sim\mathcal{H}_\text{ind}}(|\Pr_{x\sim f}(h(x)=y) - 2^{-m}| \leq \eps\,2^{-m})\geq 0.1.
\end{align*}
This result, in a slightly more general form \cite{ImpagliazzoLL89}, is called the \emph{Left-Over Hashing Lemma}. We want for our applications that $h$, seen as a random function, has besides the extraction property a couple of additional properties. The most important being that $h_i$ is a Boolean function depending on at most $k$ coordinates. This is what we call a \emph{local hash function}. These hash functions are however not necessarily pairwise independent. This leads to a substantial problem. The proof of the Left-Over Hashing Lemma relies on pairwise independence since it allows an application of Chebyshev's Inequality. In its proof we define the random variable $X=X(h):=\Pr_{x\sim f}(h(x)=y)$. Its expected value is $2^{-m}$. This still holds in our situation. Its variance can be however too large for an application of Chebyshev's Inequality. To circumvent the use of Chebyshev's Inequality we formulate the problem in terms of Fourier analysis of Boolean functions. We make use of a close connection between linear hash functions attaining the extraction property and the Fourier spectrum of probability distributions over the cube $\Bn$. 



\subsection{Further related work} Calabro et al.\ \cite{CalabroIKP08} give a probabilistic construction of a "local hash function" without the extraction property. They obtain a similar reduction as the Valiant-Vazirani reduction \cite{ValiantV86}. The extraction property is not necessary for this purpose. Gavinsky et al.\ \cite{GavinskyKKRW08} obtain a local hash function via the Bonami-Beckner Hypercontractive Inequality. However only for $|A|\geq 2^{n-O(\sqrt{n})}$. We remark that the motivations and applications in \cite{GavinskyKKRW08} are different from ours. 


We lend the term extraction property from Goldreich \& Wigderson \cite{GoldreichW97}. The goal in \cite{GoldreichW97} is to find small families of hash functions to reduce the amount of random bits needed to sample the hash function. In a more restrictive setting motivated by problems in cryptography also locality plays an important role. Vadhan \cite{Vadhan04} studies locally computable extractors. A locally computable extractor is essentially the same as a local hash function but with the difference that the functions $h_1,...,h_m$ which constitute the hash function may depend in total on $O(m)$ coordinates. A notion of locality (for pseudorandom generators) which is closer to ours is studied in the context of cryptography \cite{ApplebaumIK06} and inapproximability \cite{ApplebaumIK06-2}.

The Bonami-Beckner Hypercontractive Inequality, credited to Bonami \cite{Bonami70} and Beckner \cite{Beckner75}, found several diverse applications.  See \cite{deWolf08,ODonnell08} for further references.

\section{Preliminaries}
\label{section:preliminaries}

We make the following conventions. We assume uniform sampling if we sample from a set without specifying the distribution. We also use a special $O(\cdot)$ notation for estimating the running time of algorithms. We suppress a polynomial factor depending on the input size by writing $\OO(\cdot)$. As an example, SAT can be solved in time $\OO(2^n)$. We denote the logarithm with base $2$ by $\log(\cdot)$ and the logarithm naturalis by $\ln(\cdot)$.
 

A $\kappa$-\emph{junta} is a Boolean function which depends on at most $\kappa$ out of $n$ coordinates. We extend this notion to functions $h:\Bn\rightarrow\Bm$, $h=(h_1,...,h_m)$, by requiring that $h_i$ is a $\kappa$-junta for every $i\in[m]$. A Boolean function $f:\Bn\rightarrow\mathbb{R}$ is a \emph{distribution} iff all values of $f$ are non-negative and sum up to $1$. It has \emph{min-entropy} $t$ iff $t$ is the largest $r$ with $f(x)\leq 2^{-r}$ for all $x\in\Bn$. The \emph{relative min-entropy} $\tilde{t}$ is defined as $\tilde{t}:=t/n$. A distribution $f$ is $t$-\emph{flat} iff $f(x)=2^{-t}$ or $f(x)=0$ for all $x\in\Bn$.  


\begin{definition}
Let $0<p_1,p_2\leq 1$. Let $\mathcal{D}$ be a distribution over functions of the form $\Bn\rightarrow\Bm$. A random function $h$ is called $\kappa$-\emph{local} with probability $p_1$ iff 
\begin{align*}
	\Pr_{h\sim\mathcal{D}}(h\text{ is $\kappa$-local}) \geq p_1.
\end{align*}
It is called a $(t_0,\eps)$-\emph{hash function} (\emph{for flat distributions}) with probability $p_2$ iff
\begin{align*}
	\Pr_{h\sim\mathcal{D}}(|\Pr_{x\sim f}(h(x)=y) - 2^{-m}| \leq \eps\,2^{-m}) \geq p_2
\end{align*}
for every $y\in\Bm$ and every (flat) distribution $f$ of min-entropy $t$ with $t_0\leq t\leq n$.
\end{definition}


\section{Local Hash functions: Construction and Analysis}
\label{section:construction}


We start with the definition/construction of the two hash functions $\h$ and $\h^c$. After this we discuss a basic connection between Fourier coefficients of distributions and the special case of linear hash functions with a one-dimensional range. We generalize this finally to functions with the high-dimensional range $\B^m$.\\

\noindent\textbf{Construction of $\h$:} For $i=1,...,m$: Choose a set $S_i\sim \mu_p$. Define $\h_i(x):=\bigoplus_{j\in S_i}x_j$. The hash function is $\h:=(\h_1,...,\h_m)$.\\

\noindent In other words, $\h$ is the linear map given by a Bernoulli matrix with bias $p$.\\

\noindent\textbf{Construction of $\h^c$:} Fix $k$. For $i=1,...,m$: Choose a set $S_i\sim \{S:S\subseteq [n],\, |S|=k\}$. Define $\h^c_i(x):=\bigoplus_{j\in S_i} x_j$. The hash function is $\h^c:=(\h^c_1,...,\h^c_m)$.\\



\subsection{Hashing, Randomness Extraction, and the discrete Fourier transform} 
\label{subsection:transform}

We start with recalling basics from Fourier analysis of Boolean functions. The \emph{Fourier transform} of Boolean functions is a functional which maps $f:\Bn\rightarrow\mathbb{R}$ to $\widehat{f}:2^{[n]}\rightarrow\mathbb{R}$ and which we define by $\widehat{f}(S):= \E_{x\sim\Bn}(f(x)\,(-1)^{\bigoplus_{i\in S}x_i})$, $S\subseteq[n]$. We will study the following \emph{normalized Fourier transform} given by $\widetilde{f}(S) := 2^{n-1}\,\widehat{f}(S)$. We call the values of $\widehat{f}$ \emph{Fourier coefficients} and the collection of Fourier coefficients the \emph{Fourier spectrum} of $f$. 

We can rewrite normalized Fourier coefficients to see the connection to hashing and randomness extraction. We define $\bigoplus_{i\in \{\}}x_i:=0$.

\begin{lemma}
	\label{lemma:biased}
	Let $f:\Bn\rightarrow\mathbb{R}$ be a distribution. For any $S\subseteq[n]$,
	\begin{align*}
		\widetilde{f}(S) &= \Pr_{x\sim f}(\bigoplus_{i\in S}x_i=0)-\frac{1}{2}=\frac{1}{2} - \Pr_{x\sim f}(\bigoplus_{i\in S}x_i=1).
	\end{align*}
\end{lemma}


We may think of $\bigoplus_{i\in S}x_i$ as a single bit which we extract from $f$. We are interested in how close to a uniformly distributed bit it is. There is also a combinatorial interpretation of randomness extraction which we are going to use subsequently. We define for non-empty $A\subseteq\Bn$ the flat distribution $f_A(x):=\frac{1}{|A|}$ if $x\in A$ and $0$ otherwise. We want a random hash function $h:\Bn\rightarrow\B$ such that for every not too small $A\subseteq\Bn$ and $b\in\B$, $\Pr_{h}\left(\left|\Pr_{x\sim f_A}(h(x)=b)-\frac{1}{2}\right|\text{ is small}\right)$ is large. This is the same as saying that the probability of the event $|A\cap\{x\in A:h(x)=b\}|\approx\frac{|A|}{2}$ should be large. In words, the hyperplane in $\mathbb{F}_2^n$ induced by $h$ separates $A$ in roughly equal sized parts. 


\subsection{Analysis of Local Hash Function}

In this section we describe our technical tools for analyzing linear local hash functions. We show how to apply them on the example of the two functions $\h$ and $\h^c$. The first result we need is an inequality similar to the hypercontractive inequality for Boolean functions. We prove actually a more general inequality. It allows us to analyze linear and local hash functions with a one-dimensional range. For the generalization to functions with a high-dimensional range we use a different technique.  


\subsubsection{An Inequality}


We give an outline of the proof. The \emph{support} of a function $g:\Bn\rightarrow\mathbb{R}$ is the set of all points with a non-zero value and denoted by $\mathrm{Supp}(g)$. The norms below are w.r.t.\ the counting measure. Define
\begin{equation*}
 A(\alpha,p):=\sup_{0\leq x\leq 1} \frac{\|(1-2\,p\,x,1-2\,p\,(1-x))\|_\frac{1}{\alpha p}}{\|(x,1-x)\|_{_\frac{1}{1-\alpha p}}}.
\end{equation*}
\begin{lemma}
\label{lemma:contractive}
Let $f,g:\Bn\rightarrow\{-1,0,1\}$, $0< p\leq \frac{1}{2}$, and $0<\alpha\leq 1$. Let $\tilde{A}(\alpha,p)$ be such that $\max(A(\alpha,p),(1-p)\,4^{\alpha p})\leq\tilde{A}(\alpha,p)$. Then,
\begin{equation*}
 \E_{S\sim \mu_p}(\hat{f}(S)\,\hat{g}(S))\leq 4^{-n}\,\tilde{A}(\alpha,p)^n\,(|\mathrm{Supp}(f)|\cdot |\mathrm{Supp}(g)|)^{1-\alpha p}.
\end{equation*}
\end{lemma}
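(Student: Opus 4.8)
The plan is to replace the average over $S\sim\mu_p$ by a single bilinear form on $\Fn$ and then prove a two-function hypercontractivity-type inequality for that form by tensorizing a two-point estimate; the constant $A(\alpha,p)$ will turn out to be exactly the sharp constant in that two-point estimate. First I would unfold the Fourier coefficients: writing $\widehat f(S)=2^{-n}\sum_x f(x)(-1)^{\bigoplus_{i\in S}x_i}$ and expanding $\widehat f(S)\widehat g(S)$, the average over $S\sim\mu_p$ factors over coordinates and a coordinate where $x$ and $y$ disagree contributes $1-2p$, so
\begin{equation*}
 \E_{S\sim\mu_p}\big(\widehat f(S)\,\widehat g(S)\big)\;=\;4^{-n}\sum_{x,y\in\Fn} f(x)\,g(y)\,(1-2p)^{d(x,y)}\;=\;4^{-n}\,\langle f,\,N^{\otimes n}g\rangle,
\end{equation*}
where $d(\cdot,\cdot)$ is Hamming distance, $\langle u,v\rangle:=\sum_x u(x)v(x)$, and $N$ is the $2\times 2$ matrix with $1$ on the diagonal and $\rho:=1-2p\in[0,1]$ off it (so $N^{\otimes n}$ has entry $\rho^{d(x,y)}$ at $(x,y)$). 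Since $N^{\otimes n}$ is entrywise nonnegative, $\langle f,N^{\otimes n}g\rangle\le\langle|f|,N^{\otimes n}|g|\rangle$, so it suffices to prove, for all nonnegative $f,g:\Fn\to\mathbb{R}$,
\begin{equation*}
 \langle f,\,N^{\otimes n}g\rangle\;\le\;\tilde A(\alpha,p)^n\,\|f\|_b\,\|g\|_b,\qquad b:=\tfrac{1}{1-\alpha p},
\end{equation*}
with $\|u\|_q:=\big(\sum_x|u(x)|^q\big)^{1/q}$. For a $\{-1,0,1\}$-valued $f$ one has $\|f\|_b=|\mathrm{Supp}(f)|^{1-\alpha p}$, so after multiplying by $4^{-n}$ this is exactly the claimed bound.

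The heart of the matter is the case $n=1$, i.e.\ the inequality $\langle F,NG\rangle\le\tilde A(\alpha,p)\,\|F\|_b\,\|G\|_b$ for nonnegative $F,G\in\mathbb{R}^2$. Here I would apply H\"older's inequality with the conjugate pair $(b,b')$, $b'=1/(\alpha p)$, to get $\langle F,NG\rangle\le\|F\|_b\,\|NG\|_{b'}$, and then bound the ratio $\|NG\|_{b'}/\|G\|_b$. This ratio is scale-invariant and $NG$ is entrywise nonnegative, so one may take $G=(x,1-x)$ with $x\in[0,1]$; since $N(x,1-x)=(1-2p(1-x),\,1-2px)$, the ratio is precisely the expression inside the supremum defining $A(\alpha,p)$ (up to swapping the two numerator coordinates, which leaves the $\ell_{1/(\alpha p)}$-norm unchanged). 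Hence the supremum over nonzero nonnegative $G$ equals $A(\alpha,p)\le\tilde A(\alpha,p)$, proving the two-point case. Incidentally, the value of this ratio at $x=\tfrac12$ is $(1-p)4^{\alpha p}$, which is why that quantity appears harmlessly inside the $\max$ bounding $\tilde A$.

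Finally I would tensorize by induction on $n$, the base case being the two-point inequality just proved. Splitting off the last coordinate, let $f^0,f^1$ and $g^0,g^1$ be the restrictions to $x_n=0,1$; then $\langle f,N^{\otimes n}g\rangle=\langle f^0,N^{\otimes(n-1)}(g^0+\rho g^1)\rangle+\langle f^1,N^{\otimes(n-1)}(\rho g^0+g^1)\rangle$. Applying the induction hypothesis to each term and Minkowski's inequality ($b\ge1$) to $\|g^0+\rho g^1\|_b$ and $\|\rho g^0+g^1\|_b$ gives $\langle f,N^{\otimes n}g\rangle\le\tilde A^{\,n-1}\,\langle\phi,N\gamma\rangle$ with $\phi=(\|f^0\|_b,\|f^1\|_b)$ and $\gamma=(\|g^0\|_b,\|g^1\|_b)$; one more use of the two-point inequality bounds this by $\tilde A^{\,n}\|\phi\|_b\|\gamma\|_b=\tilde A^{\,n}\|f\|_b\|g\|_b$, since $\|\phi\|_b=\|f\|_b$ and $\|\gamma\|_b=\|g\|_b$. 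The only facts needed are $\rho=1-2p\ge0$ (for every nonnegativity step) and $b\ge1$ (for H\"older and Minkowski), both guaranteed by $0<p\le\tfrac12$ and $0<\alpha\le1$. I expect the main obstacle to be the two-point step — identifying $\|N\|_{\ell_b\to\ell_{b'}}$ restricted to nonnegative vectors with the one-dimensional supremum $A(\alpha,p)$, and checking that the exponent $1/(\alpha p)$ produced by H\"older is exactly the one in the definition of $A(\alpha,p)$; everything else is routine bookkeeping.
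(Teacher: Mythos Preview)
Your argument is correct and takes a genuinely different route from the paper. You first rewrite $\E_{S\sim\mu_p}(\widehat f(S)\widehat g(S))$ as $4^{-n}\langle f, N^{\otimes n}g\rangle$ with $N=\bigl(\begin{smallmatrix}1&\rho\\\rho&1\end{smallmatrix}\bigr)$, $\rho=1-2p\ge 0$, reduce to nonnegative $f,g$ by entrywise nonnegativity of $N^{\otimes n}$, and then prove the bilinear bound $\langle f,N^{\otimes n}g\rangle\le\tilde A^{\,n}\|f\|_b\|g\|_b$ for all nonnegative $f,g$ by a clean tensorization: H\"older plus the identification $\sup_{G\ge 0}\|NG\|_{b'}/\|G\|_b=A(\alpha,p)$ give the two-point case, and Minkowski folds the last coordinate back into a two-point instance. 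The paper instead runs the induction directly on $\{-1,0,1\}$-valued functions: the base case is a finite case check, and the induction step reduces to the scalar inequality
\[
(x_0y_0)^c+(x_1y_1)^c+(1-2p)\big((x_0y_1)^c+(x_1y_0)^c\big)\le \tilde A\,\big((x_0+x_1)(y_0+y_1)\big)^c,\quad c=1-\alpha p,
\]
which is then verified by a calculus argument (locating the minimizer of an auxiliary function $z(r,s,c)$ and reducing to the defining inequality of $A(\alpha,p)$). Your approach is shorter and more conceptual; it also shows that the hypothesis $(1-p)4^{\alpha p}\le\tilde A$ is redundant, since this is the value of the ratio at $x=\tfrac12$ and hence $\le A(\alpha,p)$. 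What the paper's approach buys is exactly what the authors advertise just after the statement: the inductive decomposition keeps the restricted functions $\{-1,0,1\}$-valued throughout, whereas your Minkowski step produces the mixed function $g^0+\rho g^1$, which forces you to carry the inequality for arbitrary nonnegative functions. For the lemma as stated this distinction is immaterial.
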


The previous lemma is shown by induction over $n$. In its proof we work explicitly with the Bernoulli distribution $S$ is chosen from and avoid entirely the use of the (noise) operator as in \cite{Beckner75}. The purpose is to decompose in the induction step the $n$-dimensional functions $f$ and $g$ into $(n-1)$-dimensional functions with the same range $\{-1,0,1\}$. Preserving the range seems to be an interesting benefit of our new proof.

The following estimation is the reason why it makes sense to introduce the new quantity $\alpha$ which does not occur in \cite{Beckner75}. Setting for example $\alpha = 1/\log(n)$ will make $\tilde{A}(\alpha, p)$ already reasonable small.

\begin{lemma}
\label{lemma:contractive2}
It holds that $A(\alpha,p)\leq \big(1+\,2^{-1/\alpha+8}\big)^{\alpha p}$ for $0<\alpha\leq\frac{1}{9}$, $0<p\leq\frac{1}{2}$.
\end{lemma}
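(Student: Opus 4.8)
Set $q:=1/(\alpha p)$ and $r:=1/(1-\alpha p)$, so that $\tfrac1q+\tfrac1r=1$ and $r=q/(q-1)$; note $q\geq 18$ because $\alpha p\leq\tfrac1{18}$. Since $t\mapsto t^q$ is increasing and all of $1-2px$, $1-2p(1-x)$, $x$, $1-x$ are nonnegative (here we use $p\leq\tfrac12$), the claimed bound is equivalent, after raising to the $q$-th power, to $A(\alpha,p)^q\leq 1+2^{-1/\alpha+8}$, where
\begin{equation*}
A(\alpha,p)^q=\sup_{0\leq x\leq 1}\frac{(1-2px)^q+(1-2p(1-x))^q}{\bigl(x^r+(1-x)^r\bigr)^{q-1}},
\end{equation*}
using $\|(x,1-x)\|_r^q=(x^r+(1-x)^r)^{q/r}=(x^r+(1-x)^r)^{q-1}$.

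The plan is to replace numerator and denominator by clean surrogates. For the denominator I would use the weighted AM--GM inequality with weights $x,1-x$ applied to $x^{\eps},(1-x)^{\eps}$, where $\eps:=r-1=1/(q-1)$:
\begin{equation*}
x^r+(1-x)^r=x\cdot x^{\eps}+(1-x)(1-x)^{\eps}\ \geq\ (x^{\eps})^{x}\,\bigl((1-x)^{\eps}\bigr)^{1-x}=\bigl(x^{x}(1-x)^{1-x}\bigr)^{\eps};
\end{equation*}
raising to the power $q-1=1/\eps$ gives $\bigl(x^r+(1-x)^r\bigr)^{q-1}\geq x^{x}(1-x)^{1-x}$. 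For the numerator I would use $1-t\leq e^{-t}$ together with $pq=1/\alpha$ to get $(1-2px)^q+(1-2p(1-x))^q\leq e^{-cx}+e^{-c(1-x)}$ with $c:=2/\alpha\geq 18$. Thus it suffices to show $\psi(x)\leq 1+2^{-1/\alpha+8}$ for all $x\in[0,1]$, where
\begin{equation*}
\psi(x):=\frac{e^{-cx}+e^{-c(1-x)}}{x^{x}(1-x)^{1-x}},
\end{equation*}
and by the symmetry $x\leftrightarrow 1-x$ I may assume $x\leq\tfrac12$.

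I would finish with a two-case estimate of $\psi$. For $\tfrac14\leq x\leq\tfrac12$ the numerator of $\psi$ is at most $2e^{-c/4}$ while $x^{x}(1-x)^{1-x}\geq\tfrac12$ (its minimum over $[0,1]$), so $\psi(x)\leq 4e^{-c/4}<1$ since $c\geq 18$. For $0\leq x\leq\tfrac14$ I would pass to logarithms: from $e^{-cx}+e^{-c(1-x)}=e^{-cx}\bigl(1+e^{-c(1-2x)}\bigr)$, $\ln(1+t)\leq t$, and $-(1-x)\ln(1-x)\leq x$ one obtains
\begin{equation*}
\ln\psi(x)\ \leq\ -x(c-1+\ln x)+e^{-c(1-2x)}.
\end{equation*}
An elementary maximization of $x\mapsto-x(c-1+\ln x)$ over $x>0$ (maximizer $x=e^{-c}$) bounds the first term by $e^{-c}$, and $e^{-c(1-2x)}\leq e^{-c/2}$ on $[0,\tfrac14]$, so $\ln\psi(x)\leq e^{-c}+e^{-c/2}\leq 2e^{-c/2}$. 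Since $c\geq 18$ we have $2^{-1/\alpha+8}=2^{8-c/2}\leq\tfrac12$, hence $\ln\bigl(1+2^{-1/\alpha+8}\bigr)\geq\tfrac34\cdot 2^{8-c/2}$, and the remaining inequality $2e^{-c/2}\leq\tfrac34\cdot 2^{8-c/2}$ reduces to $(2/e)^{c/2}\leq 96$, which holds trivially. Combining the two cases and the symmetry yields $A(\alpha,p)^q\leq 1+2^{-1/\alpha+8}$, and taking $q$-th roots gives the lemma.

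The one genuinely clever step I anticipate is the clean lower bound $\bigl(x^r+(1-x)^r\bigr)^{q-1}\geq x^{x}(1-x)^{1-x}$ for the awkward denominator; once the ratio has been reduced to the exponential/binary-entropy form $\psi$, everything else is routine estimation, with the large hidden constant $2^8$ in the exponent leaving ample slack for the crude two-case split.
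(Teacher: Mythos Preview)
Your proof is correct, and it takes a genuinely different route from the paper's.

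\textbf{What the paper does.} The paper keeps the numerator $l(x)=\|(1-2px,1-2p(1-x))\|_{1/(\alpha p)}$ and denominator $u(x)=\|(x,1-x)\|_{1/(1-\alpha p)}$ separate. It first observes that both are convex and symmetric around $x=\tfrac12$, then uses Bernoulli's inequality to replace the target $(1+2^{-1/\alpha+8})^{\alpha p}u(x)$ by $u(x)+u_0$ with $u_0=\alpha p\,2^{-1/\alpha+6}$, and further by a simpler convex minorant $v(x)=x^q+1-qx+u_0$. The heart of the argument is geometric: one computes the point $x_0$ where $v$ has slope $-p$, draws the tangent $t(x)$ there, and uses convexity of $l$ to reduce the comparison $l\leq t$ to the two endpoints $x=0$ and $x=\tfrac12$. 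Each endpoint check is then a somewhat delicate explicit calculation involving $x_0$.

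\textbf{What you do differently.} You raise everything to the $q$-th power and replace both numerator and denominator by clean closed forms: the weighted AM--GM bound $(x^r+(1-x)^r)^{q-1}\geq x^x(1-x)^{1-x}$ for the denominator, and $1-t\leq e^{-t}$ (together with $pq=1/\alpha$) for the numerator. This collapses the problem to bounding the explicit single-parameter function $\psi(x)=(e^{-cx}+e^{-c(1-x)})/(x^x(1-x)^{1-x})$ with $c=2/\alpha\geq 18$, after which a crude two-case split finishes. Your approach is shorter and avoids the tangent-line construction and the intermediate point $x_0$ entirely; the price is that you give away more slack (your bounds on $\psi$ are much smaller than the allowed $1+2^{8-c/2}$), but the constant $2^8$ in the statement absorbs this easily. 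The paper's method, by contrast, tracks $p$ more tightly through the tangent of slope $-p$, which would matter if one wanted a sharper constant but is unnecessary here.
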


Finally, we arrive at the result we need. Its an application of the previous results together with a result of Chor \& Goldreich \cite{ChorG89}. It seems that the Bonami-Beckner Inequality is too weak for proving it.

\begin{lemma}
\label{lemma:contractive3}
Let $f:\Bn\rightarrow\mathbb{R}$ be a distribution of relative min-entropy $\tilde{t}$, $2^{\tilde{t}n}\in\{1,...,2^n\}$, and $0<p\leq\frac{1}{2}$. Then, $$\E_{S\sim\mu_p}(|\widetilde{f}(S)|)\leq \frac{1}{2}\sqrt{2}^{-p\cdot n\cdot \tilde{t}/\log(512/\tilde{t})}.$$
\end{lemma}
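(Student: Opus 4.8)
The plan is to reduce the statement to flat distributions via a decomposition result, then apply Lemma~\ref{lemma:contractive} with the two functions taken equal to get a second-moment estimate, pass from $L^2$ to $L^1$ by Cauchy--Schwarz, and finally optimize the free parameter $\alpha$.

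\emph{Step 1 (reduction to flat distributions).} Since $f$ has relative min-entropy $\tilde t$, we have $f(x)\le 2^{-t}$ for $t:=\tilde t\,n$, and by hypothesis $2^t=2^{\tilde t n}$ is a positive integer at most $2^n$. By the result of Chor \& Goldreich~\cite{ChorG89}, such an $f$ is a convex combination $f=\sum_j\lambda_j f_{A_j}$ of flat distributions supported on sets $A_j$ of size exactly $2^t$. Normalized Fourier transformation is linear, so $\widetilde f(S)=\sum_j\lambda_j\widetilde{f_{A_j}}(S)$, whence $|\widetilde f(S)|\le\sum_j\lambda_j|\widetilde{f_{A_j}}(S)|$ and $\E_{S\sim\mu_p}(|\widetilde f(S)|)\le\sum_j\lambda_j\,\E_{S\sim\mu_p}(|\widetilde{f_{A_j}}(S)|)$. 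Since all the $A_j$ have the same size, it suffices to prove the bound when $f=f_A$ with $|A|=2^t$. (The degenerate case $\tilde t=0$ is trivial because $|\widetilde f(S)|\le\tfrac12$ by Lemma~\ref{lemma:biased}.)

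\emph{Step 2 (second-moment bound for a flat $f_A$).} Write $f_A=2^{-t}\mathbf 1_A$, where $\mathbf 1_A:\Bn\to\{0,1\}\subseteq\{-1,0,1\}$ has support $A$. Fix $\alpha\in(0,\tfrac19]$ (chosen below) and set $\tilde A(\alpha,p):=\big(1+2^{-1/\alpha+8}\big)^{\alpha p}$. This is a legal choice for Lemma~\ref{lemma:contractive}, i.e.\ $\max\big(A(\alpha,p),(1-p)4^{\alpha p}\big)\le\tilde A(\alpha,p)$: the bound $A(\alpha,p)\le\tilde A(\alpha,p)$ is exactly Lemma~\ref{lemma:contractive2}, and $(1-p)4^{\alpha p}\le 1\le\tilde A(\alpha,p)$ follows from $\ln(1-p)+2\alpha p\ln 2\le p(2\alpha\ln 2-1)<0$ for $\alpha\le\tfrac19$. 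Applying Lemma~\ref{lemma:contractive} with $f=g=\mathbf 1_A$ gives $\E_{S\sim\mu_p}(\widehat{\mathbf 1_A}(S)^2)\le 4^{-n}\tilde A(\alpha,p)^n (2^t)^{2(1-\alpha p)}$. Dividing by $2^{2t}$ and using $\widetilde{f_A}=2^{n-1}\widehat{f_A}=2^{\,n-1-t}\widehat{\mathbf 1_A}$ yields $\E_{S\sim\mu_p}(\widetilde{f_A}(S)^2)\le \tfrac14\,\tilde A(\alpha,p)^n\,2^{-2t\alpha p}$.

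\emph{Step 3 ($L^2\to L^1$ and optimization of $\alpha$).} By Cauchy--Schwarz, $\E_{S\sim\mu_p}(|\widetilde{f_A}(S)|)\le\big(\E_{S\sim\mu_p}(\widetilde{f_A}(S)^2)\big)^{1/2}\le\tfrac12\,\tilde A(\alpha,p)^{n/2}\,2^{-t\alpha p}$. Using $1+2^{-1/\alpha+8}\le 4^{\,2^{-1/\alpha+8}}$ we get $\tilde A(\alpha,p)^{n/2}\le 2^{\,n\alpha p\,2^{-1/\alpha+8}}$, so the estimate becomes $\tfrac12\,2^{\alpha p\,(n\,2^{-1/\alpha+8}-t)}$. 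Now take $\alpha:=1/\log(512/\tilde t)$, which lies in $(0,\tfrac19]$ since $0<\tilde t\le 1$; then $2^{-1/\alpha+8}=\tilde t/2$, hence $n\,2^{-1/\alpha+8}-t=n\tilde t/2-\tilde t n=-\tilde t n/2$, and the exponent equals $-\alpha p\,\tilde t n/2$. Therefore $\E_{S\sim\mu_p}(|\widetilde{f_A}(S)|)\le\tfrac12\,2^{-\alpha p\,\tilde t n/2}=\tfrac12\,\sqrt 2^{\,-p\,n\,\tilde t/\log(512/\tilde t)}$, which together with Step~1 proves the lemma.

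\emph{Main obstacle.} The content-level insight is to feed Lemma~\ref{lemma:contractive} with the \emph{same} function on both sides to obtain an $L^2$ bound and only then relax to $L^1$, rather than trying to estimate $\E_{S\sim\mu_p}(|\widetilde f(S)|)$ directly; and to invoke Chor--Goldreich so that flatness may be assumed (making $\mathbf 1_A$ genuinely $\{-1,0,1\}$-valued, as Lemma~\ref{lemma:contractive} requires). The most delicate routine point is checking that $\tilde A(\alpha,p)$ as defined dominates $(1-p)4^{\alpha p}$ uniformly over $0<p\le\tfrac12$, for which the estimate $\ln(1-p)\le -p$ does the job.
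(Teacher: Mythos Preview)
Your proof is correct and follows essentially the same route as the paper: reduce to flat distributions via Chor--Goldreich, apply Lemma~\ref{lemma:contractive} with $f=g=\mathbf 1_A$ and the choice $\tilde A(\alpha,p)=(1+2^{-1/\alpha+8})^{\alpha p}$, pass from the second moment to the first via Jensen/Cauchy--Schwarz, and set $\alpha=1/\log(512/\tilde t)$. Your argument is in fact slightly more complete than the paper's, since you explicitly verify the hypothesis $(1-p)4^{\alpha p}\le\tilde A(\alpha,p)$ of Lemma~\ref{lemma:contractive}, which the paper leaves implicit.
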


Applying the Bonami-Beckner hypercontractive inequality we get

\begin{lemma}
	\label{lemma:contractive4}
	Let $f:\Bn\rightarrow\mathbb{R}$ be a distribution of min-entropy $t$ with $2^{t}\in\{1,...,2^n\}$, $k$ be a positive integer, and $0<\zeta<1$. Then, 
	\begin{align*}
		\E_{S\sim {[n]\choose k}}(|\widetilde{f}(S)|)\leq \frac{1}{2}\, n^{-(1-\zeta)k/2}\, 2^{(n-t)\,k\,n^{-\zeta}}.
	\end{align*}
\end{lemma}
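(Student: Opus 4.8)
The plan is to pass from the first moment to the second moment, to recognise the second moment as (a multiple of) the level-$k$ Fourier weight of $f$, and then to bound that weight by combining the Bonami--Beckner hypercontractive inequality with the min-entropy bound on $\|f\|_q$.

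First I would apply Cauchy--Schwarz,
\begin{align*}
\E_{S\sim\binom{[n]}{k}}\big(|\widetilde{f}(S)|\big)\leq\Big(\E_{S\sim\binom{[n]}{k}}\big(\widetilde{f}(S)^2\big)\Big)^{1/2}=\Big(\binom{n}{k}^{-1}\textstyle\sum_{|S|=k}\widetilde{f}(S)^2\Big)^{1/2},
\end{align*}
and, since $\widetilde{f}(S)=2^{n-1}\widehat{f}(S)$, rewrite the inner sum as $2^{2n-2}\,W^k[f]$ with $W^k[f]:=\sum_{|S|=k}\widehat{f}(S)^2$. For a parameter $0<\rho\leq 1$ to be chosen, the Bonami--Beckner inequality $\|T_\rho f\|_2\leq\|f\|_{1+\rho^2}$ (norms with respect to the uniform measure on $\Bn$), together with $\|T_\rho f\|_2^2\geq\rho^{2k}\,W^k[f]$, gives $W^k[f]\leq\rho^{-2k}\,\|f\|_{1+\rho^2}^2$. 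Since $f$ is a distribution with $f(x)\leq 2^{-t}$ for all $x$,
\begin{align*}
\|f\|_{1+\rho^2}^{1+\rho^2}=2^{-n}\sum_x f(x)^{1+\rho^2}\leq 2^{-n}\,(2^{-t})^{\rho^2}\sum_x f(x)=2^{-n-t\rho^2},
\end{align*}
so $\|f\|_{1+\rho^2}^2=2^{-2(n+t\rho^2)/(1+\rho^2)}$. Putting the three estimates together and simplifying the exponent (using $n-1-(n+t\rho^2)/(1+\rho^2)=-1+(n-t)\rho^2/(1+\rho^2)$) yields, for every admissible $\rho$,
\begin{align*}
\E_{S\sim\binom{[n]}{k}}\big(|\widetilde{f}(S)|\big)\leq\frac12\cdot\frac{\rho^{-k}}{\sqrt{\binom{n}{k}}}\cdot 2^{(n-t)\rho^2/(1+\rho^2)}.
\end{align*}

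It then remains to optimise $\rho$. I would use the elementary bound $\binom{n}{k}\geq(n/k)^k$, so that $\rho^{-k}/\sqrt{\binom{n}{k}}\leq(k/(n\rho^2))^{k/2}$, and take $\rho^2:=k\,n^{-\zeta}$. In the principal regime $k\leq n^\zeta$ (where $\rho\leq 1$ is admissible) this makes $(k/(n\rho^2))^{k/2}=(n^\zeta/n)^{k/2}=n^{-(1-\zeta)k/2}$, while $\rho^2/(1+\rho^2)\leq\rho^2=k\,n^{-\zeta}$ and $n-t\geq 0$; substituting gives exactly the claimed bound $\tfrac12\,n^{-(1-\zeta)k/2}\,2^{(n-t)k\,n^{-\zeta}}$.

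The step I expect to be the main obstacle is the complementary regime $k>n^\zeta$, where $k\,n^{-\zeta}>1$ forces the cap $\rho=1$: one is then left with $\tfrac12\,\binom{n}{k}^{-1/2}\,2^{(n-t)/2}$ (equivalently the crude Parseval bound $W^k[f]\leq\|f\|_2^2\leq 2^{-n-t}$), and this must be squeezed below $\tfrac12\,n^{-(1-\zeta)k/2}\,2^{(n-t)k\,n^{-\zeta}}$ using $2^{(n-t)k\,n^{-\zeta}}\geq 2^{n-t}$ and a sharper estimate of $\binom{n}{k}$ when $k$ is close to $n$ (together with the trivial bound $|\widetilde{f}(S)|\leq\tfrac12$ and the observation that $t=n$ is degenerate, since then $\widehat{f}(S)=0$ for every $S\neq\emptyset$). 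Apart from this case split, what is left --- checking that the constant $\tfrac12$ and the exponents telescope exactly, and the inequality $\binom{n}{k}\geq(n/k)^k$ --- is routine.
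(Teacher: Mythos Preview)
Your approach is essentially the paper's: Cauchy--Schwarz to pass to the level-$k$ Fourier weight, hypercontractivity to bound that weight, then the parameter choice $\rho^2=k\,n^{-\zeta}$ together with $\binom{n}{k}\ge(n/k)^k$. The paper phrases hypercontractivity as the Kahn--Kalai--Linial inequality $\sum_S\delta^{|S|}\widehat g(S)^2\le\Pr(g\ne0)^{2/(1+\delta)}$ applied to the $\{0,1\}$-valued indicator $g_f=\lceil f\rceil$ of a \emph{flat} $f$, and then extends to general $f$ via the Chor--Goldreich convexity decomposition into flat distributions. You instead apply Bonami--Beckner directly to $f$ and use the min-entropy bound $\sum_x f(x)^{1+\rho^2}\le 2^{-t\rho^2}$; this is a mild streamlining (it bypasses the reduction to flat distributions) but otherwise the identical computation, with the same parameter $\delta=\rho^2=k\,n^{-\zeta}$.

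Your worry about the regime $k>n^{\zeta}$ is well placed, but the obstacle is not one you can remove: the paper's own proof has the same gap (it plugs $\delta=k\,n^{-\zeta}$ into KKL without checking $\delta\le1$), and in fact the inequality as stated is \emph{false} there. Take $f$ uniform on the even-parity strings, so $t=n-1$, and let $k=n$ with $\zeta$ small. Then $\E_{S}|\widetilde f(S)|=|\widetilde f([n])|=\tfrac12$, while the right-hand side equals $\tfrac12\,n^{-(1-\zeta)n/2}\,2^{\,n^{1-\zeta}}\to0$. So your proposed rescue via $\rho=1$ and sharper binomial estimates cannot succeed. The lemma should be read with the tacit hypothesis $k\le n^{\zeta}$, which holds in every use the paper makes of it (there $k\ge5$ is a constant and $\zeta=1-4/k$). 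With that proviso, your argument is complete as written.
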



\subsubsection{High-Dimensional Range}

Our technique for analyzing hash functions of the form $\Bn\rightarrow\B^m$ works as follows. Assume $f$ has min-entropy $t$. Conditioning on an event $E\subseteq\Bn$ yields a new distribution $f'$ with min-entropy $t'$. We can not say much about the relation of $t$ and $t'$ in general. If $E$ is however a hyperplane (in the vector space $\mathbb{F}^n_2$) induced by $\bigoplus_{i\in S}x_i$ then our inequality from above tells us that $t'\approx t-1$ in the expectation, $S\sim\mu_p$. Iterating this step and keeping control of the entropy decay we get our result. This process works as long as we reach some threshold $t_0$ which is essentially determined by the bias $p$.

Formally, the proof is an induction over $m$ and the induction step an application of Lemma \rf{lemma:contractive3}. We apply it to distributions $f_{i}$ which we define inductively for concrete $h^*:\Bn\rightarrow\Bm$. For $i=0$, $f_0:=f$. For $i>0$, $f_i$ is $f_{i-1}$ conditioned on the event $\{x\in\Bn:h^*_i(x)=y_i\}$, i.e., $f_{i}(z):= {\Pr_{x\sim f_{i-1}}(x = z\;|\;h^*_i(x)=y_i)}$. The function $f_i$ is not well defined for every $h^*$ since $\Pr_{x\sim f_{i-1}}(h^*_i(x)=y_i) = 0$ is possible. If this is the case we define $f_j$ to be $0$ on all points and for all $j\geq i$. The following condition excludes this case if $\eta<1$.
\begin{align}
	\label{property:1}
	\forall 1\leq i\leq m:\;|\Pr_{x\sim f_{i-1}}(h^*_i(x)=y_i) - {1}/{2}| \leq {\eta}/{2}
\end{align}

The next lemma allows us to bound the error of approximation, in particular, how far $\Pr_{x\sim f}(h^*(x)=y)$ is from the optimal value $2^{-m}$.

\begin{lemma} 
	\label{lemma:1}
	Let $0<\eta<1$. If Cond.\ \rf{property:1} holds for $h^*$, then\\

	\noindent 1. $(1-\eta)^j\, 2^{-j}\leq \Pr_{x\sim f}(h^*_1(x)=y_1,...,h^*_j(x)=y_j) \leq (1+\eta)^j\, 2^{-j}$, $j=1,...,m$,\\
	
	\noindent 2. $|\Pr_{x\sim f}(h^*(x)=y) - 2^{-m}|\leq 2^{-m}\, ((1+\eta)^m-1)$.
\end{lemma}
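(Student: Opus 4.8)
The plan is to reduce everything to the chain rule for the conditional distributions $f_0=f, f_1, \dots, f_m$ defined before the lemma. Observe that, by telescoping the conditioning,
\begin{align*}
 \Pr_{x\sim f}(h^*_1(x)=y_1,\dots,h^*_j(x)=y_j) = \prod_{i=1}^{j}\Pr_{x\sim f_{i-1}}(h^*_i(x)=y_i),
\end{align*}
since $f_{i-1}$ is exactly $f$ conditioned on $h^*_1(x)=y_1,\dots,h^*_{i-1}(x)=y_{i-1}$, so the $i$-th factor is the corresponding conditional probability of the event $h^*_i(x)=y_i$. First I would note that Condition \rf{property:1} with $\eta<1$ guarantees $\Pr_{x\sim f_{i-1}}(h^*_i(x)=y_i)\geq (1-\eta)/2>0$ for every $i$, so each $f_i$ is a genuine (non-zero) distribution and the product above is meaningful; the degenerate branch in the definition of $f_i$ never triggers.

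For part 1, I would simply bound each of the $j$ factors from above and below using Condition \rf{property:1}: for every $i$,
\begin{align*}
 \tfrac{1-\eta}{2}\ \leq\ \Pr_{x\sim f_{i-1}}(h^*_i(x)=y_i)\ \leq\ \tfrac{1+\eta}{2}.
\end{align*}
Multiplying these inequalities over $i=1,\dots,j$ gives
\begin{align*}
 (1-\eta)^j\,2^{-j}\ \leq\ \Pr_{x\sim f}(h^*_1(x)=y_1,\dots,h^*_j(x)=y_j)\ \leq\ (1+\eta)^j\,2^{-j},
\end{align*}
which is exactly the claim. (Equivalently, this is an immediate induction on $j$, the induction step being a single application of the bound on one conditional factor.)

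For part 2, since $h^*=(h^*_1,\dots,h^*_m)$ and $y=(y_1,\dots,y_m)$, the event $h^*(x)=y$ is precisely $\{h^*_1(x)=y_1,\dots,h^*_m(x)=y_m\}$, so part 1 with $j=m$ yields $(1-\eta)^m 2^{-m}\leq \Pr_{x\sim f}(h^*(x)=y)\leq (1+\eta)^m 2^{-m}$, i.e.
\begin{align*}
 -2^{-m}\big(1-(1-\eta)^m\big)\ \leq\ \Pr_{x\sim f}(h^*(x)=y)-2^{-m}\ \leq\ 2^{-m}\big((1+\eta)^m-1\big).
\end{align*}
To collapse this into the stated one-sided absolute bound it then suffices to check the elementary inequality $1-(1-\eta)^m\leq (1+\eta)^m-1$, i.e. $(1+\eta)^m+(1-\eta)^m\geq 2$, which is clear from the binomial expansion (the even-degree terms are non-negative and the constant term contributes $2$). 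This step, and the well-definedness remark at the start, are the only points requiring any care; the rest is bookkeeping, so I do not anticipate a genuine obstacle.
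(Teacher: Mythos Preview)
Your proof is correct. Part 1 is essentially identical to the paper's: both write $q_j=\Pr_{x\sim f}(h^*_1(x)=y_1,\dots,h^*_j(x)=y_j)=\prod_{i=1}^j p_i$ with $p_i=\Pr_{x\sim f_{i-1}}(h^*_i(x)=y_i)$ and then bound each factor by $(1\pm\eta)/2$.

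For part 2 your route differs from the paper's. You simply apply part 1 at $j=m$ and then invoke the elementary inequality $(1+\eta)^m+(1-\eta)^m\geq 2$ to absorb the lower deviation into the upper one. The paper instead proves part 2 by a telescoping estimate
\[
|q_m-2^{-m}|\leq \sum_{i=1}^m |q_i-\tfrac{1}{2}q_{i-1}|\,2^{-(m-i)} = \sum_{i=1}^m q_{i-1}\,|p_i-\tfrac{1}{2}|\,2^{-(m-i)},
\]
bounds $q_{i-1}\leq (1+\eta)^{i-1}2^{-(i-1)}$ via part 1, and sums the resulting geometric series $\eta\sum_{i=1}^m(1+\eta)^{i-1}=(1+\eta)^m-1$. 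Both arguments land on the same bound; yours is shorter and needs only the convexity fact $(1+\eta)^m+(1-\eta)^m\geq 2$, while the paper's telescoping is slightly more robust in that it tracks the error contribution of each coordinate separately (which is not actually exploited here).
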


In the proof of our main lemma we establish the desired extraction property for $\h$ and $\h^c$.

\begin{lemma}[Main Lemma] \label{lemma:main}
Let $0<\eps<1$, $0<p\leq\frac{1}{2}$. Define $$P(\tilde{t}):=\frac{m}{\eps}\sqrt{2}^{-p n \tilde{t} / \log(512/\tilde{t})}.$$

\noindent\textup{\textbf{Hash Function $\h$}.} If there exists $\tilde{t}_0$ such that $P=P(\tilde{t}_0)<1$ and $\tilde{t}_0 n+m+1\leq n$, then $\h$ is a $(\tilde{t}_0 n+m+1,\eps)$-hash function for flat distributions with probability at least $(1-P)^m>0$.\\ 

\noindent Let $0<\eps<1$, $0<\zeta<1$. Let $k$ be a positive integer. Define $$Q(t):= \frac{m}{\eps}\, n^{-(1-\zeta)k/2}\, 2^{(n-t)\,k\,n^{-\zeta}}.$$
\noindent\textup{\textbf{Hash Function $\h^c$}.}  If there exists $t_0$ such that $Q=Q(t_0)<1$ and $t_0 +m+1\leq n$, then $\h^c$ is a $(t_0+m+1,\eps)$-hash function for flat distributions with probability at least $(1-Q)^m>0$.
\end{lemma}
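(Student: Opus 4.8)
The argument is carried out separately for $\h$ and for $\h^c$, and the two cases are formally identical except that the per-step estimate comes from Lemma~\ref{lemma:contractive3} in the first case and from Lemma~\ref{lemma:contractive4} in the second; so I would write out the case of $\h$ and then indicate the translation. Fix once and for all a flat distribution $f$ of min-entropy $t$ with $\tilde{t}_0 n + m + 1 \le t \le n$ and a target point $y\in\Bm$. The plan is to run the conditioning process described just before Lemma~\ref{lemma:1}: set $f_0 := f$ and let $f_i$ be $f_{i-1}$ conditioned on $\{x:\h_i(x)=y_i\}$. I would fix a parameter $\eta$ of order $\eps/m$, chosen small enough that $(1+\eta)^m - 1 \le \eps$ while still $1/\eta = O(m/\eps)$ (this follows from $(1+\eta)^m \le e^{m\eta}$; pinning down the constant is a one-line computation), and then show that, over the random choice of $S_1,\ldots,S_m$, Condition~\ref{property:1} holds for $\h$ with this $\eta$ with probability at least $(1-P)^m$. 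Lemma~\ref{lemma:1}(2) then converts this directly into $|\Pr_{x\sim f}(\h(x)=y) - 2^{-m}| \le 2^{-m}((1+\eta)^m - 1) \le \eps\,2^{-m}$, which is the asserted hash property.

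For the per-step estimate, condition on a realisation of $S_1,\ldots,S_{i-1}$ for which $f_{i-1}$ is a well-defined, not identically zero, flat distribution, with relative min-entropy $\tilde{t}_{i-1}$. By Lemma~\ref{lemma:biased}, $|\Pr_{x\sim f_{i-1}}(\h_i(x)=y_i) - \tfrac12| = |\widetilde{f_{i-1}}(S_i)|$, so the probability over $S_i\sim\mu_p$ that the $i$-th clause of Condition~\ref{property:1} fails equals $\Pr_{S_i}(|\widetilde{f_{i-1}}(S_i)| > \eta/2)$, which Markov's inequality bounds by $\tfrac{2}{\eta}\,\E_{S_i\sim\mu_p}(|\widetilde{f_{i-1}}(S_i)|)$. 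Applying Lemma~\ref{lemma:contractive3} to $f_{i-1}$, and using that $P(\cdot)$ is decreasing together with the bound $\tilde{t}_{i-1}\ge\tilde{t}_0$ established below, this is at most $P(\tilde{t}_0)=P$. Letting $G_i$ be the event that the $i$-th clause of Condition~\ref{property:1} holds, the estimate just made reads $\Pr(G_i \mid G_1\cap\cdots\cap G_{i-1}) \ge 1-P$, since conditioning on the past fixes $f_{i-1}$ while $G_i$ involves only the fresh randomness $S_i$; chaining these conditional estimates gives $\Pr(G_1\cap\cdots\cap G_m) \ge (1-P)^m$, positive by hypothesis. It is exactly this chain-rule structure, rather than a union bound, that yields the multiplicative form $(1-P)^m$.

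It remains to control the entropy, which is where flatness and the threshold enter. Since $f$ is flat, every $f_i$ is again flat, with min-entropy $t_i = t + \sum_{j=1}^{i}\log\Pr_{x\sim f_{j-1}}(\h_j(x)=y_j)$; on $G_1\cap\cdots\cap G_i$ each summand lies in $[\,-1+\log(1-\eta),\,-1+\log(1+\eta)\,]$, so $t_i \ge t - i - O(m\eta) \ge t - m - 1 \ge \tilde{t}_0 n$ for every $i\le m-1$, and in particular $f_i$ is not identically zero because $\eta<1$. Thus the hypothesis $\tilde{t}_0 n + m + 1 \le t$ is precisely what guarantees that all the distributions $f_0,\ldots,f_{m-1}$ to which Lemma~\ref{lemma:contractive3} is applied still have relative min-entropy at least $\tilde{t}_0$, where $P<1$. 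The case of $\h^c$ is the same argument verbatim, with $\mu_p$ replaced by the uniform distribution on $\binom{[n]}{k}$, with absolute (not relative) min-entropy tracked, with $Q(\cdot)$ decreasing in $t$, with Lemma~\ref{lemma:contractive4} supplying the per-step bound $Q$, and with the threshold $t_0 + m + 1 \le t$ playing the role of $\tilde{t}_0 n + m + 1 \le t$.

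The main obstacle I anticipate is not any single inequality but the bookkeeping around the conditioning: one has to verify carefully that $f_{i-1}$ is a legitimate (measurable) function of $S_1,\ldots,S_{i-1}$ and is nonzero on the event being conditioned on, so that Lemma~\ref{lemma:biased} and Lemma~\ref{lemma:contractive3}/\ref{lemma:contractive4} may be invoked conditionally and the product bound $(1-P)^m$ (resp.\ $(1-Q)^m$) genuinely follows. The secondary fiddly point is making the entropy-decay estimate of the third paragraph precise enough that the slack ``$+m+1$'' in the threshold and the constant hidden in $\eta\asymp\eps/m$ line up cleanly with the quantities $P$ and $Q$ as stated; this is routine, but it is the only place where constants need genuine care.
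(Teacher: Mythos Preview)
Your proposal is correct and follows essentially the same approach as the paper's proof: induction over $i$, per-step bound via Lemma~\ref{lemma:biased} plus Markov plus Lemma~\ref{lemma:contractive3} (resp.\ Lemma~\ref{lemma:contractive4}), entropy tracking through Lemma~\ref{lemma:1}(1), and final conversion via Lemma~\ref{lemma:1}(2). The paper simply fixes $\eta:=\eps/(2m)$ and bounds $f_i(z)\le 2^{-t}2^{i}(1-\eta)^{-i}\le 2^{-t+i+2}$ directly (so $t_i\ge t-i-2\ge t_0$), which is exactly your entropy-decay step with the constants made explicit; your caveat about lining up $\eta$ with $P$ is indeed the only place requiring care, and the choice $\eta=\eps/(2m)$ is what makes the $1/2$ in Lemma~\ref{lemma:contractive3} cancel to produce $P=\tfrac{m}{\eps}\sqrt{2}^{-pn\tilde t/\log(512/\tilde t)}$ on the nose.
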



We argue next that the restriction $p=\Omega(\frac{\log(m)}{n})$ and that the trade-off between the entropy of the distribution and the bias $p$ are essentially optimal. In other words, we can only expect small improvements of the Main Lemma. 

\subsubsection{Rank of Bernoulli Matrices} We recall the combinatorial idea behind hashing. Let $M$ be a Bernoulli matrix with bias $p$ and let $y\in\Bm$. The preimage of $M$, $y$ intersects any large enough subset $A\subseteq\Bn$ in approximately $|A|\cdot 2^{-m}$ points. Let us assume $m=n$. If especially $A=\Bn$ we expect that the linear system $Mx=y$ has one solution in $\mathbb{F}^n_2$. This is is the case iff $M$ has full rank. The threshold for this property is around $\Theta(\frac{\log(n)}{n})$ \cite{Cooper00a}. In particular, the probability that $M$ has full rank can get very small and in which case $M$ fails to have the extraction property with high probability. With respect to this consideration it is not surprising that our probabilistic construction becomes efficient only if $p=\Omega({\frac{\log(n)}{n}})$.

\subsubsection{The Isolation Problem} We will argue next that also the trade-off between the size of $A$, i.e.\ the min-entropy of the corresponding flat distribution, and $p$ is close to optimal. Actually, we can restrict $A$ to be the solution set of a $k$-CNF. The following result is due to Calabro et al. \cite{CalabroIKP08}: For any distribution $\mathcal{D}$ of $k$-CNFs over $n$ variables, there is a satisfiable $k$-CNF $F$ such that $\Pr_{F'\sim\mathcal{D}}(|\mathrm{sol}(F)\cap\mathrm{sol}(F')|=1)\leq 2^{-\Omega(n/k)}$, where $\mathrm{sol}(F)$ ($\mathrm{sol}(F')$) refers to the set of solutions of $F$ ($F'$). The corresponding problem of computing $F'$ is the Isolation Problem for $k$-CNFs \cite{CalabroIKP08}. We show how the Main Lemma relates to a solution of this problem. Let $G$ be a $k$-CNF and let $p=\frac{k}{n}$, $k=\Theta(\kappa \log(\kappa) \log(n))$. The Main Lemma guarantees just that $|\mathrm{sol}(G)\cap\mathrm{sol}(G')|$, $G'$ the CNF-encoding of $\h$, is with high probability within a small interval around $v=2^{O(n/\kappa)}$. We need to define an appropriate distribution $\mathcal{D}_0$ to apply the mentioned result. Chernoff's Inequality guarantees that $\h$ is encodable as a $k$-CNF $G''$ with high probability. We extend $G''$ by constraints (literals) which encode $x_i=0$ or $x_i=1$ as follows. Uniformly at random select a set of $\log(v)$ variables. Uniformly at random set the value of these variables. This defines our distribution $\mathcal{D}_0$. With probability at least $2^{-O(n\cdot\log(\kappa)/\kappa)}$ we get a $O(k)$-CNF $G'$ such that $|\mathrm{sol}(G)\cap\mathrm{sol}(G')|=1$. The reason for this is the following simple to prove fact (Exercise 12.2, pg.\ 152 in \cite{Jukna01}): Let $B\subseteq\Bn$ be non-empty. There exists a set of coordinates $I\subseteq[n]$ and $b\in\B^I$ such that $|I|\leq\log(|B|)$ and $|\{x\in B:x_i=b_i\,\forall i\in I\}|=1$. Note that the construction of $\mathcal{D}_0$ depends only on the parameters $n$, $k$, and $m$, but not on the input $k$-CNF $G$. We can thus apply the result of Calabro et al.\ \cite{CalabroIKP08}. Comparing the lower and and upper bound we see that we are off by a factor $O(\log(k)^2 \log(n))$ in the exponent.

\section{Complexity of Approximate Counting}

The algorithm is depicted in Fig.\ \ref{figure:1}. It is similar to the algorithm of Gomes et al.\ \cite{GomesSS06}. One difference is the construction of $\h$ which is a Bernoulli matrix with bias $p$ in our case. Gomes et al.\ \cite{GomesSS06} select uniformly at random a linear function which depends on exactly $k$ coordinates for every row. Another difference is the output. We output an approximation for the number of solutions. The algorithm of Gomes et al.\ \cite{GomesSS06} outputs a lower and an upper bound. Besides the experimental results, they can show that with high probability the output lower bound is indeed smaller than the number of solutions. They give however no estimation for the quality of the output bounds which would be necessary for bounding the approximation ratio. 

\begin{figure}
\small
Input: CNF $F$ over $n$ variables and a parameter $k$.\vspace{5pt}\\
1. \noindent Set $p:=\frac{k+1}{2 n}$.\\
2. For $l=1,...,n+1$:\\
3. \hspace*{20pt} Repeat $8\lceil\log(n)\rceil$ times:\\ 
4. \hspace*{40pt} Construct $\h$. Select $b\sim\B^l$.\\
5. \hspace*{40pt} If $|S_i|>k$ for some $i\in[l]$ then stop.\\
6. \hspace*{40pt} Let $G$ be the $k$-CNF encoding of $h(x)=b$.\\ 
7. \hspace*{40pt} Record if $F\wedge G$ is satisfiable.\\
8. \hspace*{20pt} If unsatisfiability was recorded more than $4\lceil\log(n)\rceil$ times\\
9. \hspace*{40pt} then output $2^{l-1}$ and stop.\\
10. Output $0$.
\normalfont
\caption{Algorithm \textsc{acount} with access to a SAT-oracle}
\label{figure:1}
\end{figure}

We define algorithm \textsc{acount-constant} similar to \textsc{acount} but with the only difference that it constructs $\h^c$. We stress the fact that our algorithms are easy to implement and that we can amplify the success probability further by repeating the inner loop appropriately. 

\begin{theorem}
 \label{theorem:algorithm}
1. \textbf{(Complexity of Approximate Counting)} Let $c>0$ and assume there is an algorithm for SAT with running time $\OO(2^{cn})$. For any $\delta>0$, there is an algorithm which outputs with high probability in time  $\OO(2^{(c+\delta)n})$ the approximation $\tilde{s}$ for the number of solutions $s$ of an input CNF such that $$(1-2^{-\alpha n})\,s\leq \tilde{s}\leq (1+2^{-\alpha n})\,s$$ with $\alpha=\Omega(\frac{\delta^2}{\log(\frac{1}{\delta})})$.\\ 

2. \textbf{(Algorithm Analysis)} Let $k$ be such that $4\log(16 n) \leq k+1\leq n$ and let $\kappa$ be such that $k+1=\kappa\,\log(512\kappa)\,4\log(16 n)$. Let $s$ be the number of solutions of $F$. The probability that algorithm \textsc{acount} outputs in time $O(n\cdot\log(n)\cdot (n^2 + 2^k\cdot k\cdot n + \mathrm{size}(F)))$ the approximation $\tilde{s}$ such that $$\frac{1}{4}\,2^{-n / \kappa }\,s\leq \tilde{s}\leq 4\,s$$ is at least $1/4$.\\

For constant $k\geq 5$, the probability that algorithm \textsc{acount-constant} outputs the approximation $\tilde{s}$ such that $$\frac{1}{4}\,2^{-n + \frac{\log(n)}{k}\,n^{1-4/k} }\,s\leq \tilde{s}\leq 4\,s$$ is at least $1/4$.
\end{theorem}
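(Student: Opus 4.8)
The plan is to analyze algorithm \textsc{acount} (and analogously \textsc{acount-constant}) by applying the Main Lemma to bound the behavior of the inner loop at each level $l$, then use a Chernoff-type argument over the $8\lceil\log n\rceil$ repetitions to control the recorded (un)satisfiability counts, and finally assemble the complexity claim (Part 1) from the algorithm-analysis claim (Part 2) by choosing parameters $k,\kappa$ appropriately in terms of $\delta$.

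First I would set up Part 2. Fix the input CNF $F$ with solution set $A\subseteq\Bn$, $|A|=s$, and let $f_A$ be the associated flat distribution of min-entropy $t=\log s$. At level $l$, one iteration constructs $\h:\Bn\to\B^l$ and picks $b\sim\B^l$; by Lemma \rf{lemma:biased}, Lemma \rf{lemma:1}, and the Main Lemma (applied with $m=l$, $\eps$ a small constant, and $p=\frac{k+1}{2n}$, so that the threshold $\tilde t_0 n+l+1$ translates into a bound on $\log s$), the event that $F\wedge G$ is satisfiable — equivalently $|A\cap h^{-1}(b)|\geq 1$ — holds with probability close to $1$ whenever $l$ is comfortably below $\log s$, and $F\wedge G$ is unsatisfiable with good probability once $l$ exceeds roughly $\log s$ by the slack term $O(n/\kappa)$ (for \textsc{acount-constant}, the slack is governed by Lemma \rf{lemma:contractive4} instead of Lemma \rf{lemma:contractive3}, giving the $n^{1-4/k}$-type bound). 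The parameter translation $k+1=\kappa\log(512\kappa)\cdot 4\log(16n)$ is exactly what makes $P(\tilde t_0)$ (resp. $Q(t_0)$) drop below a constant like $1/2$; here I also need to check via Chernoff's inequality that $|S_i|>k$ happens rarely (the expected size is $pn=\frac{k+1}{2}$, so $k$ is twice the mean and the deviation probability is $2^{-\Omega(k)}$, absorbed into the constants), so that line~5 does not spuriously stop the algorithm. Then a standard Chernoff bound over the $8\lceil\log n\rceil$ independent inner iterations shows: for $l$ where the satisfiable event has probability $\geq$ some constant $>\tfrac12$, the count of recorded unsatisfiabilities exceeds $4\lceil\log n\rceil$ with probability $\le 1/\mathrm{poly}(n)$, and symmetrically for $l$ where unsatisfiability dominates; a union bound over the $n+1$ levels keeps the total failure probability below, say, $3/4$, yielding the stated success probability $\ge 1/4$. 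On the event of success, the algorithm outputs $2^{l-1}$ where $l$ is the first level that trips line~8, and the two-sided bounds above pin $l$ to within an additive $O(\log\text{(slack)})$ of $\log s$, which after exponentiating gives $\tfrac14\,2^{-n/\kappa}s\le\tilde s\le 4s$ (and the analogous bound for the constant-width variant). The running time is dominated by $n+1$ levels times $8\lceil\log n\rceil$ iterations times the cost per iteration: sampling $\h$ costs $O(n^2)$, encoding $h(x)=b$ as a $k$-CNF costs $O(2^k k n)$ (one needs $2^{k}$-sized gadgets per row), and the single SAT-oracle call on $F\wedge G$ costs $\OO(\mathrm{size}(F))$ when using the oracle as a black box — summing to the displayed bound.

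Next I would derive Part 1 from Part 2. Given the hypothetical $\OO(2^{cn})$ SAT algorithm, I replace each oracle call in \textsc{acount} by a run of that algorithm on $F\wedge G$, which has the same $n$ variables as $F$ (this is the key variable-preservation property, so the oracle calls cost $\OO(2^{cn})$ each, not $2^{c\cdot\text{something larger}}$). Choosing $k=\Theta(\delta^{-1}\log(1/\delta)\cdot\log n)$ makes $2^k = \OO(2^{o(n)})$ — in fact $2^{\delta n/2}$ suffices, so that the per-iteration overhead $2^k k n$ is absorbed into the $2^{\delta n}$ budget — while simultaneously forcing $\kappa=\Theta(\delta/\log(1/\delta))\cdot \Theta(n/\log n)\cdot(1/\text{stuff})$, wait: more carefully, with $k\approx \frac{\delta}{2} n$ we get $\kappa\approx \frac{k}{\log(512\kappa)\cdot 4\log(16n)}=\Theta\!\big(\frac{\delta n}{\log(1/\delta)\log n}\big)$, so the error exponent is $\alpha n = n/\kappa = \Theta\!\big(\frac{\log(1/\delta)\log n}{\delta}\big)$, which is $\omega(1)$ but only $\mathrm{polylog}(n)$ — that is \emph{too weak}. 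So instead one must keep $k$ a growing-but-sublinear function, e.g. $k=\Theta(\delta^{-1}\log(1/\delta))\cdot\text{const}$ a \emph{constant} independent of $n$, making $\kappa$ constant and $\alpha=\Theta(1/\kappa)=\Omega(\delta^2/\log(1/\delta))$ a constant, while $2^k k n=O(n)$ is negligible; the $2^{\delta n}$ slack is then spent entirely on... nothing, meaning we only need $c+o(1)$. To genuinely need the $+\delta$ one uses \textsc{acount-constant} with the sparsification-compatible encoding, but the cleanest route to the stated $\alpha=\Omega(\delta^2/\log(1/\delta))$ is to pick the constant $k$ so that $1/\kappa=\Theta(\delta^2/\log(1/\delta))$ and absorb all polynomial encoding overhead, together with the constant-factor amplification of the success probability from $1/4$ to ``high probability'' by $O(1)$ independent repetitions, into the $2^{\delta n}$ factor. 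I would present this parameter bookkeeping carefully, since it is where the exponent in $\alpha$ is actually born.

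The main obstacle is the coupling in Part 2 between the two competing requirements on each level $l$: we need the ``satisfiable with probability $>1/2+\Omega(1)$'' window (for $l\lesssim\log s$) and the ``unsatisfiable with probability $>1/2+\Omega(1)$'' window (for $l\gtrsim\log s+\text{slack}$) to be cleanly separated, with no ambiguous band of levels where neither the Chernoff bound for over-counting nor for under-counting applies — otherwise the union bound over $n+1$ levels fails. The Main Lemma gives a \emph{one-sided} guarantee (the extraction property: $|\Pr_x(h(x)=y)-2^{-m}|\le\eps2^{-m}$ with good probability) only when the min-entropy exceeds the threshold $\tilde t_0 n+m+1$; below the threshold we have \emph{no} guarantee, and in particular there can be a range of $l$ near $\log s$ where $F\wedge G$ is satisfiable with probability uncomfortably close to $1/2$. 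The resolution is that this ``bad band'' has width exactly the slack $O(n/\kappa)$ (resp. the $n^{1-4/k}$ term), and within it the algorithm is simply \emph{allowed} to stop or not — either way the output $2^{l-1}$ lies within the claimed multiplicative window around $s$. So the real work is to argue that (i) the algorithm \emph{does} stop by the time $l$ exceeds the threshold (the unsatisfiable side of the Main Lemma kicks in), (ii) it does \emph{not} stop while $l$ is well below $\log s$ (the satisfiable side holds with overwhelming probability since there $\Pr_x(h(x)=b)\ge (1-\eps)2^{-l}\cdot s \gg 1$, so $|A\cap h^{-1}(b)|\ge1$ essentially always — this uses Lemma \rf{lemma:1} part~1, not the full extraction property), and (iii) combining (i)--(ii), the stopping level $l^\star$ satisfies $\log s - O(1) \le l^\star \le \log s + O(n/\kappa)+O(1)$, which exponentiates to the stated bounds after accounting for the output being $2^{l^\star-1}$.
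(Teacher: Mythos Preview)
Your plan for Part~2 has the right architecture --- two regimes separated by an ``unknown band'', Chernoff amplification over the $8\lceil\log n\rceil$ inner repetitions, and a union bound over the $n+1$ levels --- but you have the two tools swapped. The Main Lemma requires the min-entropy condition $t\ge \tilde t_0 n+l+1$, so it applies only when $l\le\log s-n/\kappa-1$; there it certifies $|B|\approx |A|\,2^{-l}\ge1$, i.e.\ the \emph{satisfiable} side. The \emph{unsatisfiable} side for $l>\log s+3$ comes from plain Markov on $\E|B|=|A|\,2^{-l}<1/8$, not from the Main Lemma. Hence the $n/\kappa$ slack lands on the \emph{lower} bound of the stopping level, giving $\log s-n/\kappa-1\le l^\star\le\log s+3$ and thus $\tfrac14\,2^{-n/\kappa}s\le\tilde s\le 4s$, as stated. (Your line ``$\Pr_x(h(x)=b)\ge(1-\eps)2^{-l}\cdot s\gg1$'' also cannot be right --- a probability is at most $1$; what you want is $|B|=|A|\cdot\Pr_{x\sim f_A}(h(x)=b)\ge(1-\eps)|A|\,2^{-l}\ge1$.) These are fixable.

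The genuine gap is Part~1. You attempt to derive it from the \emph{statement} of Part~2 by choosing $k,\kappa$, but this cannot succeed: the output of \textsc{acount} is always $2^{l-1}$ for an integer $l$, so its multiplicative error is never better than a constant, let alone $1+2^{-\Omega(n)}$. (Your own parameter chase already exposes this: with $k$ constant, the defining relation $k+1=\kappa\log(512\kappa)\cdot4\log(16n)$ forces $\kappa=o(1)$, hence $n/\kappa\to\infty$.) The paper's argument for Part~1 instead \emph{modifies the algorithm}: it uses self-reducibility together with the assumed $\OO(2^{cn})$ SAT algorithm to \emph{exactly} count solutions whenever that count is at most $2^{\delta n}$ (cost $\OO(2^{(c+\delta)n})$); at each level it continues as long as $F\wedge G$ has at least $\lfloor 2^{\delta n}\rfloor$ solutions, and when it stops it outputs the exact residual count $|B|$ times $2^{l}$ rather than $2^{l-1}$. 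The error is then governed by the $\eps$ in the extraction inequality $(1-\eps)|A|\le |B|\cdot 2^{l}\le(1+\eps)|A|$, and the key observation is that the Main Lemma allows $\eps$ to be taken \emph{exponentially small}: with $p=\Theta(\delta)$ and $\tilde t\ge\delta$ one has $P=2^{-\Omega(\delta^2 n/\log(1/\delta))+\log(1/\eps)}$, which remains $o(1)$ for $\log(1/\eps)=\Theta(\delta^2 n/\log(1/\delta))$. That, not a black-box invocation of Part~2, is where the exponent $\alpha=\Omega(\delta^2/\log(1/\delta))$ comes from.
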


\bibliographystyle{plain}
\bibliography{refs}

\newpage

\appendix

\begin{center}
   \huge{\textbf{Appendix}}
\end{center}

\section{Proof of Lemma \rf{lemma:biased}}

\begin{proof}
Let $I$ be the image of $f$ and $X_p:=\{x\in\Bn:f(x)=p\}$, $p\in I$. Let $E:=\{x\in\Bn:\bigoplus_{i\in S}x_i=1\}$.
\begin{align*} 
	&\widetilde{f}(S) =2^{n-1}\, \E_x(f(x)\,(-1)^{\bigoplus_{i\in S}x_i}) =\\
    &=2^{n-1}\sum_{p\in I}p\,\bigg(\Pr_{x}(f(x)=p,\bigoplus_{i\in S}x_i=0)-\Pr_{x}(f(x)=p,\bigoplus_{i\in S}x_i=1)\bigg)=\\
	&=2^{n-1}\sum_{p\in I}p\,\bigg(\Pr_{x\sim\Bn}(f(x)=p)-2\cdot\Pr_{x\sim\Bn}(f(x)=p,x\in E)\bigg)=\\
	&=\sum_{p\in I}p\,\bigg(\frac{1}{2}\,|X_p|-|X_p\cap E|\bigg) = \frac{1}{2}\sum_{p\in I,\,x\in X_p}p - \sum_{p\in I} \sum_{x\in X_p\cap E} p =\\
	&= \frac{1}{2} - \Pr_{x\sim f}(\bigoplus_{i\in S}x_i=1)=\Pr_{x\sim f}(\bigoplus_{i\in S}x_i=0) - \frac{1}{2}.
\end{align*}%
\end{proof}

\section{Proof of Lemma \rf{lemma:contractive}}

\begin{proof}
The proof is by induction on $n$. Let $n=1$. If $f$ or $g$ is the constant $0$ function then the claim holds. There are $8$ remaining functions of the form $\B\rightarrow\{-1,0,1\}$. We start with functions with range $\B$. Let $h_1$ be the identity function, $h_2$ be the function which maps $0$ to $1$, $1$ to $0$, and let $h_3$ be the constant $1$ function. Their Fourier coefficients in order $(\hat{h}_i(\{\}),\hat{h}_i(\{1\}))$ are $(\frac{1}{2},-\frac{1}{2})$, $(\frac{1}{2},\frac{1}{2})$, and $(1,0)$. Avoiding symmetric cases we have $6$ combinations to check. We start our case analysis with $f=g=h_3$:
\begin{equation*}
 1-p\leq 4^{-1}\,\tilde{A}(\alpha,p)\,4^{1-\alpha p} = \tilde{A}(\alpha,p)\,4^{-\alpha p}. 
\end{equation*}
This inequality holds by definition of $\tilde{A}(\alpha,p)$. For the cases $f=h_1,g=h_3$ and $f=h_2,g=h_3$ we have $(1-p)\frac{1}{2}$ on the left-hand side of the inequality:
\begin{equation*}
 (1-p)\,\frac{1}{2}\leq 4^{-1}\,2^{1-\alpha p} = \frac{1}{2}\,2^{-\alpha p}\leq \tilde{A}(\alpha,p)\,\frac{1}{2}\,2^{-\alpha p} 
\end{equation*}
since $\tilde{A}(\alpha,p)\geq 1$. The cases $f=g=h_1$ and $f=g=h_2$ are immediate since the left-hand sides are at most $\frac{1}{4}$. Let $h_4$ be the function which maps $0$ to $-1$ and $1$ to $1$. Its Fourier coefficients are $(\hat{h}_4(\{\}),\hat{h}_4(\{1\}))=(0,-1)$. The claim is thus clearly true for $f=g=h_4$. We reduce the remaining cases to the previous ones by using the linearity of the Fourier transform (multiplying with $-1$). 

Assume that the induction hypothesis holds for $n-1$. For $h:\Bn\rightarrow\{-1,0,1\}$, let $h'_b(x)$ be $1$ if $h(x)=1$ and $x_n=b$, $b\in\B$, and $0$ otherwise. Let $h_b$ be the restriction of $h'_b$ to first $n-1$ coordinates. Let $T\subseteq[n-1]$. It holds that
\begin{align}
 \label{fact:1}
 \hat{h'}_b(T)=& (-1)^b\,\hat{h'}_b(T\cup\{n\})\\
 \label{fact:2}
 \hat{h}_b(T)=&\frac{\hat{h}'_b(T)}{2}.
\end{align}
In what follows, $S$ is chosen from $[n]$ according to $\mu_p$ and $S'$ is chosen from $[n-1]$ also according to $\mu_p$.
\begin{align*}
 &\E_{S}(\hat{f}(S)\,\hat{g}(S)) =\\ 
 &p\,\E_{S}(\hat{f}(S\cup\{n\})\,\hat{g}(S\cup\{n\})\,|n\in S\,) + (1-p)\,\E_{S}(\hat{f}(S)\,\hat{g}(S)\,|n\not\in S\,).
\end{align*}
By the linearity of the Fourier transform (in particular, $\widehat{h_0+h_1}=\hat{h}_0+\hat{h}_1$) and by Eq.\ \ref{fact:1} and \ref{fact:2}, 
\begin{align*}
 &\E_{S}(\hat{f}\hat{g})=\frac{1}{4}\,\big(\E_{S'}(\hat{f_0}\hat{g_0}) + \E_{S'}(\hat{f_1}\hat{g_1}) + (1-2p)(\E_{S'}(\hat{f_0}\hat{g_1})+\E_{S'}(\hat{f_1}\hat{g_0}))\big).
\end{align*}
Define $x_b:=|\mathrm{Supp}(f_b)|$, $y_b:=|\mathrm{Supp}(g_b)|$, $c:=1-\alpha p$, $d_1:=1-2p$, and  $d_2:=\tilde{A}(\alpha,p)$. By the induction hypothesis, 
\begin{align*}
 &\E_{S}(\hat{f}\hat{g})\leq 4^{-n}\,d_2^{n-1}\,( (x_0 y_0)^{c} + (x_1 y_1)^{c} + d_1\,( (x_0 y_1)^{c}+ (x_1 y_0)^{c})).
\end{align*}
We are left with showing that 
\begin{align*}
 &(x_0 y_0)^{c} + (x_1 y_1)^{c} + d_1\,( (x_0 y_1)^{c}+ (x_1 y_0)^{c})\leq d_2\, ((x_0+x_1)(y_0+y_1))^{c}.
\end{align*}
This inequality becomes trivial if at least $2$ variables are $0$ since $d_1\leq 1\leq d_2$. We assume w.l.o.g. that $x_0,x_1,y_0>0$, define $r:=\frac{y_1}{y_0}$, $s:=\frac{x_1}{x_0}$, and divide the inequality by $x_0y_0$. This yields 
\begin{align*}
 1 + d_1\, ( s^c+ r^c) + (r s)^c\leq d_2\, (1+r)^c\,(1+s)^c.
\end{align*}
We define $z(r,s,c):= d_2\, (1+r)^c\,(1+s)^c - 1 - (r s)^c - d_1\, ( s^c+ r^c)$. We are going to show that there exist one $r_0\geq 0$ such that  $\frac{\partial z}{\partial r}(r_0)=0$ first and $\frac{\partial z}{\partial^2 r}(r_0)>0$ for all $s\geq 0$ subsequently. This proves that $r_0$ is a minimum. Finally, we show that $z(r_0,s,p)\geq0$. Differentiating $z$ in $r$ and dividing by $c\,r^{c-1}$ yields
\begin{align*}
 d_2\, (r^{-1}+1)^{c-1}\,(1+s)^{c}  - s^c - d_1=0.
\end{align*}
Resolving for $r$ we get 
\begin{align*}
  r_0 = \bigg(\bigg(\frac{d_1 + s^c}{d_2\,(1+s)^{c}}\bigg)^{1/(c-1)}-1\bigg)^{-1}.
\end{align*}
Define $t:=\frac{d_1 + s^c}{d_2\,(1+s)^{c}}$. Since $s> 0$ and $p\leq\frac{1}{2}$ we conclude that $t> 0$. We also need to show that $t<1$ to conclude that $r_0$ is a positive real. By definition of $d_2$  
\begin{equation*}
  \big(|1-2\,p\,x|^\frac{1}{1-c}+|1-2\,p\,(1-x)|^\frac{1}{1-c}\big)^{1-c} \leq d_2\,\big(|x|^\frac{1}{c}+|1-x|^\frac{1}{c}\big)^{c}.
\end{equation*}
We get this inequality also by multiplying 
\begin{equation}
 \label{eq:3}
 \big((1+d_1\,s^{c})^\frac{1}{1-c}+(d_1+s^{c})^\frac{1}{1-c}\big)^{1-c}\leq d_2\,(1+s)^{c}
\end{equation}
with $(1+s^c)^{-1}$. Then, $0<x=\frac{s^c}{1+s^c}<1$. Note that $d_2$ depends only on $\alpha$ and $p$ and not on $x$. Since $\frac{1+d_1\,s^{c}}{d_2\,(1+s)^{c}}>0$ we conclude that $t<1$.

Next, dividing $\frac{\partial z}{\partial^2 r}$ by $c\,(c-1)\,r^{c-1}$ and noting that $c\,(c-1)<0$ it holds that $\frac{\partial z}{\partial^2 r}(r_0)>0$ iff 
\begin{align*}
  d_2\,(1+r_0^{-1})^{-2+c}\,(1+s)^c-d_1-s^c<0
\end{align*}
iff
\begin{align*}
  (1+r_0^{-1})^{-2+c}=t^{\frac{c-2}{c-1}}<t.
\end{align*}
This inequality holds since $\frac{c-2}{c-1} > 1$ by definition and $0<t<1$ as observed above. We are left with showing that $z(r_0,s,p)\geq0$. It holds that $z(r_0,s,p)\geq0$ iff
\begin{align*}
 (t^{1/(c-1)}-1)^c\,(1+d_1\,s^c)+s^c+d_1\leq d_2\,(1+s)^c\,t^{c/(c-1)}.
\end{align*}
Dividing by $d_2\,(1+s)^c$ yields
\begin{align*}
 (t^{1/(c-1)}-1)^c\,\frac{1+d_1\,s^c}{d_2\,(1+s)^c}\leq t^{c/(c-1)}-\frac{s^c+d_1}{d_2\,(1+s)^c}=t\,(t^{1/(c-1)}-1)
\end{align*}
iff
\begin{align*}
 (t^{1/(c-1)}-1)^{c-1}\leq \frac{d_1+s^c}{1+d_1\,s^c}.
\end{align*}
Dividing by $d_1+s^c$ and rewriting we get Eq.\ \ref{eq:3}.
\end{proof}

\section{Proof of Lemma \rf{lemma:contractive2}}

\begin{lemma} \label{lemma:convexity}
\begin{enumerate}
 \item Let $r\in\mathbb{R}$ and $q\geq 1$. The function $\eta_{r,q}(x)=\|(1-rx,1-r(1-x))\|_q$ is convex in $\mathbb{R}$ and symmetric around $\frac{1}{2}$, i.e., $\eta_{r,q}(\frac{1}{2}-y)=\eta_{r,q}(\frac{1}{2}+y)$. 
 \item $A(\alpha,p)\geq 1$ for every $0<\alpha\leq 1$ and $0< p\leq \frac{1}{2}$.
\end{enumerate}
\end{lemma}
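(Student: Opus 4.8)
The plan is to dispatch all three claims by elementary manipulations, the common theme being that $\eta_{r,q}$ is a norm composed with an affine map.

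\textbf{Symmetry (part 1).} I would just substitute. The two coordinates of the vector defining $\eta_{r,q}(\tfrac12-y)$ are $1-r(\tfrac12-y)=1-\tfrac{r}{2}+ry$ and $1-r\bigl(1-(\tfrac12-y)\bigr)=1-\tfrac{r}{2}-ry$; performing the same computation for $\eta_{r,q}(\tfrac12+y)$ produces exactly these two numbers, only with the coordinates interchanged. Since $\|(a,b)\|_q=\|(b,a)\|_q$, the two values coincide, which is the asserted symmetry around $\tfrac12$.

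\textbf{Convexity (part 1).} I would observe that $x\mapsto(1-rx,\,1-r(1-x))$ is an affine map $\mathbb{R}\to\mathbb{R}^2$, and that $\|\cdot\|_q$ with $q\geq 1$ is a norm on $\mathbb{R}^2$, hence convex (positive homogeneity together with the triangle inequality gives $\|\lambda u+(1-\lambda)v\|_q\leq\lambda\|u\|_q+(1-\lambda)\|v\|_q$ for $\lambda\in[0,1]$). A convex function precomposed with an affine map is convex, so $\eta_{r,q}$ is convex on all of $\mathbb{R}$.

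\textbf{Lower bound on $A(\alpha,p)$ (part 2).} Here I would resist the temptation to evaluate the quotient defining $A(\alpha,p)$ at the symmetric point $x=\tfrac12$, since that only yields $4^{\alpha p}(1-p)$, which can be strictly below $1$. Instead I would test the endpoint $x=0$: the denominator $\|(0,1)\|_{1/(1-\alpha p)}$ equals $1$, while the numerator $\|(1,\,1-2p)\|_{1/(\alpha p)}=\bigl(1+(1-2p)^{1/(\alpha p)}\bigr)^{\alpha p}\geq 1$ because $0\leq 1-2p<1$ for $0<p\leq\tfrac12$ (and the exponent $1/(\alpha p)$ is positive). Hence the supremum defining $A(\alpha,p)$ is at least $1$. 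There is essentially no obstacle in this lemma; the only point needing a moment's care is precisely this last observation, that the symmetric value of $x$ is the wrong one for the lower bound and one should instead plug in an endpoint of $[0,1]$.
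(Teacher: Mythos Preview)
Your proof is correct and follows essentially the same approach as the paper. For convexity the paper writes out the Minkowski inequality explicitly on $t\,\eta(x)+(1-t)\,\eta(y)$, which is exactly your ``norm composed with an affine map is convex'' argument unpacked; for part~2 the paper also tests $x_0=0$ and observes $\|(1,1-2p)\|_{1/(\alpha p)}\geq 1=\|(0,1)\|_{1/(1-\alpha p)}$. Your additional remark that $x=\tfrac12$ yields only $4^{\alpha p}(1-p)$, which can fall below $1$, is a useful clarification not spelled out in the paper.
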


\begin{proof}
We begin with the first claim. Let $0\leq t\leq 1$ and $x,y\in\mathbb{R}$. By Minkoswki's inequality
\begin{align*}
 &t\,\eta(x)+(1-t)\,\eta(y)=\\
 &\| t\,(1-r x, 1- r (1-x)) \|_q+\| (1-t)\,(1-r y, 1-r (1-y)) \|_q\geq\\
 &\| t\,(1-r x, 1-r (1-x)) +(1-t)\,(1-r y, 1-r (1-y)) \|_q=\\
 &\| (1-r (t\,x+(1-t)\,y), 1-r (t\,(1-x)+(1-t)\,(1-y)) \|_q=\\
 &\| (1-r(t\,x+(1-t)\,y), 1-r (1-(t\,x+(1-t)\,y)) \|_q=\eta(t\,x+(1-t)\,y).
\end{align*}
For the second claim, we have to find $x_0$ such that $$\|(1-2\,p\,x_0,1-2\,p\,(1-x_0))\|_\frac{1}{\alpha p}\geq\|(x_0,1-x_0)\|_{_\frac{1}{1-\alpha p}}.$$ Set $x_0=0$. It holds that $\|(1,1-2\,p)\|_\frac{1}{\alpha p}\geq 1 =\|(0,1)\|_{_\frac{1}{1-\alpha p}}$.
\end{proof}

The following proposition is known as Bernoulli's inequality except for the inequality $1+\frac{r x}{2}\leq (1+x)^r$. It can be seen by showing that $(1+x)^r-1-\frac{r x}{2}$ is monotone increasing in $[0,1]$.  
\begin{proposition} \label{proposition:linear}
\begin{enumerate}
 \item If $r\geq 1$ and $x\geq -1$ then $(1+x)^r\geq 1+r x$.
 \item If $0<x,r\leq 1$ then $1+\frac{r x}{2}\leq (1+x)^r\leq 1+r x$. 
\end{enumerate}
\end{proposition}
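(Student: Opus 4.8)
The plan is to prove each inequality as an elementary one-variable calculus statement in $x$, with $r$ treated as a fixed parameter, reducing everything to the sign of a first derivative together with the value at $x=0$ (or $x=-1$). Throughout I use only that $t\mapsto t^{s}$ is nondecreasing on $[0,\infty)$ for $s\ge 0$ and nonincreasing for $s\le 0$.

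For the first claim, fix $r\ge 1$ and set $\phi(x):=(1+x)^r-1-rx$ on $[-1,\infty)$. Then $\phi(0)=0$ and $\phi'(x)=r\bigl((1+x)^{r-1}-1\bigr)$. Since $r-1\ge 0$, the function $x\mapsto(1+x)^{r-1}$ is nondecreasing and equals $1$ at $x=0$, so $\phi'(x)\le 0$ on $[-1,0]$ and $\phi'(x)\ge 0$ on $[0,\infty)$; hence $\phi$ attains its minimum at $x=0$ and $\phi(x)\ge\phi(0)=0$, which is $(1+x)^r\ge 1+rx$. (One may equivalently invoke convexity of $t\mapsto t^r$ together with its tangent line at $t=1$; I would use whichever matches the surrounding style.)

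For the second claim, fix $0<r\le 1$. For the upper bound, put $\psi(x):=1+rx-(1+x)^r$ on $[0,\infty)$; then $\psi(0)=0$ and $\psi'(x)=r\bigl(1-(1+x)^{r-1}\bigr)\ge 0$ because $r-1\le 0$ forces $(1+x)^{r-1}\le 1$ for $x\ge 0$. Thus $\psi\ge 0$ on $[0,\infty)$, giving $(1+x)^r\le 1+rx$, in particular on $(0,1]$. For the lower bound, following the hint in the text, put $\chi(x):=(1+x)^r-1-\tfrac{rx}{2}$ on $[0,1]$; then $\chi(0)=0$ and $\chi'(x)=r\bigl((1+x)^{r-1}-\tfrac12\bigr)$. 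Here $r-1\in(-1,0]$, so $x\mapsto(1+x)^{r-1}$ is nonincreasing on $[0,1]$ and its minimum over $[0,1]$ is $2^{r-1}$; since $r-1\ge -1$ we get $2^{r-1}\ge 2^{-1}=\tfrac12$, hence $\chi'\ge 0$ on $[0,1]$. Therefore $\chi$ is nondecreasing and $\chi(x)\ge\chi(0)=0$, i.e.\ $1+\tfrac{rx}{2}\le(1+x)^r$.

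There is no real obstacle in the argument; the one point that actually uses the hypothesis $x\le 1$ is the last step, where the estimate $(1+x)^{r-1}\ge 2^{r-1}\ge\tfrac12$ relies on the base $1+x$ being at most $2$. If $x$ were allowed to be arbitrarily large this lower bound would degrade towards $0$ and the inequality $1+\tfrac{rx}{2}\le(1+x)^r$ would fail, so the restriction in the statement is exactly what is needed. I would present the three items in the order above, since each is self-contained.
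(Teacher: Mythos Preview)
Your proof is correct and follows exactly the approach the paper sketches: the paper cites the first item and the upper bound in the second item as the standard Bernoulli inequality, and for the lower bound $1+\tfrac{rx}{2}\le(1+x)^r$ it says only that one should show $(1+x)^r-1-\tfrac{rx}{2}$ is monotone increasing on $[0,1]$, which is precisely your function $\chi$ and your derivative computation $\chi'(x)=r\bigl((1+x)^{r-1}-\tfrac12\bigr)\ge 0$ via $(1+x)^{r-1}\ge 2^{r-1}\ge\tfrac12$. So you have simply filled in the details the paper omits.
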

We will also use the standard estimate $(1-\frac{1}{x})^x\leq \frac{1}{\mathrm{e}}\leq (1-\frac{1}{x+1})^x$, $x\geq 1$, without explicitly mentioning it. 


\begin{proof}
Let $l(x):=\|(1-2\,p\,x,1-2\,p\,(1-x))\|_\frac{1}{\alpha p}$ and $u(x):=\|(x,1-x)\|_{_\frac{1}{1-\alpha p}}$. Both functions are symmetric around $x=\frac{1}{2}$, Lemma \ref{lemma:convexity}. It suffices thus to show the claim for $x\in[0,\frac{1}{2}]$. We simplify the upper bound first. The function $u$ attains its minimum $2^{-\alpha p}$ at $x_0=\frac{1}{2}$, Lemma \ref{lemma:convexity}. Together with Proposition \ref{proposition:linear},
\begin{align*}
 \big(1+\,2^{-1/\alpha+8}\big)^{\alpha p}\,u(x) \geq u(x) + u(x)\,\alpha p\,2^{-1/\alpha+7}\geq u(x) + \alpha p\,2^{-1/\alpha+6}.
\end{align*}
Define $q:=\frac{1}{1-\alpha p}$ and $u_0:=\alpha p\,2^{-1/\alpha+6}$. By Proposition \ref{proposition:linear},
\begin{align*}
 u(x)+u_0\geq x^{q}+(1-x)^{q}+u_0\geq x^{q}+1-q x+u_0=:v(x). 
\end{align*}
The function $v$ is convex and monotone decreasing in $[0,\frac{1}{2}]$ since $\frac{\partial v}{\partial x}=q\,x^{q-1}-q\leq 0$ and $\frac{\partial v}{\partial^2 x}=q(q-1)\,x^{q-2}\geq 0$ for $x\in(0,\frac{1}{2}]$. The idea now is to find a tangent $t$ of $u'$ which lies above $l$. Since $l$ is convex, Lemma \ref{lemma:convexity}, we can show the latter by comparing $l$ and $t$ at $x=0$ and $x=\frac{1}{2}$. The function $v$ has slope $-p$ at $x_0=(1-(1-\alpha p)\,p)^{\frac{1-\alpha p}{\alpha p}}$, 
\begin{equation*}
 \mathrm{exp}\left(-\frac{(1-\alpha p)^2}{\alpha\,(1-(1-\alpha p)p)}\right) \leq x_0\leq \mathrm{exp}\left(-\frac{(1-\alpha p)^2}{\alpha}\right). 
\end{equation*}
We define $t(x):=(v(x_0)+p x_0) - p x$.

Case $l(0)\leq t(0)$: $l(0)=(1+(1-2p)^{\frac{1}{\alpha p}})^{\alpha p}\leq 1+\alpha p\, \mathrm{exp}(-\frac{2}{\alpha})$. On the other side $t(0)=x_0^{q}+1-q x_0+u_0+p x_0\geq x_0^{q}+1-q x_0+u_0\geq x_0^{q}+1 - (1+2\alpha p)\, x_0+u_0$ where we used that $-q\geq -(1+2\alpha p)$. Since $1-\alpha p\geq\sqrt{\frac{1}{\log(\mathrm{e})}}$, $\mathrm{exp}(-\frac{(1-\alpha p)^2}{\alpha})\leq 2^{-\frac{1}{\alpha}}$. It suffices thus to show that 
\begin{equation*}
  0\leq  \mathrm{exp}\left(-\frac{1-\alpha p}{\alpha\,(1-(1-\alpha p)p)}\right)-\mathrm{exp}\left(-\frac{(1-\alpha p)^2}{\alpha}\right)+61\alpha p 2^{-1/\alpha}
\end{equation*}
if 
\begin{equation*}
  0\leq  \mathrm{exp}\left(-\frac{1-\alpha p + 2p}{\alpha}\right)-\mathrm{exp}\left(-\frac{1-2\alpha p}{\alpha}\right)+61\alpha p 2^{-1/\alpha}.
\end{equation*}
We used $1-\alpha p +2 p\geq\frac{1-\alpha p}{1-(1-\alpha p)p}$ here. Multiplying with $\mathrm{exp}(\frac{1}{\alpha}-2p)$ and rearranging yields 
\begin{equation*}
 1-\mathrm{e}^{-p-\frac{2p}{\alpha}}\leq 61\, \mathrm{e}^{-2p}\, \mathrm{e}^{\frac{1}{\alpha}}\, 2^{-\frac{1}{\alpha}}\,\alpha\,p. 
\end{equation*}
Noting that $\frac{1}{3}\leq \mathrm{e}^{-2p}$ and using the estimates $\mathrm{e}^{-p-\frac{2p}{\alpha}}\geq 1-p-\frac{2p}{\alpha}$ and $1+\frac{2}{\alpha}\leq \frac{3}{\alpha}$ we conclude the claim from 
\begin{equation*}
 \frac{3}{\alpha}\leq \frac{61}{3}\, \mathrm{e}^{\frac{1}{\alpha}}\, 2^{-\frac{1}{\alpha}}\,\alpha. 
\end{equation*}

Case $l(\frac{1}{2})\leq t(\frac{1}{2})$: $l(\frac{1}{2})=2^{\alpha p}\,(1-p)\leq \big(\frac{2^\alpha}{\mathrm{e}}\big)^p$. By Proposition \ref{proposition:linear} $\big(1-\frac{p}{2}\big)^{1/p}\geq \frac{1}{2}$ and hence $ \big(\frac{2^\alpha}{\mathrm{e}}\big)^p\leq 1-\frac{p}{2}$. It suffices thus to show $v(x_0)+p x_0\geq 1$. With the the same simplifications as above we get
\begin{equation*}
  0\leq  \mathrm{exp}\left(-\frac{1-\alpha p}{\alpha\,(1-(1-\alpha p)p)}\right)-\mathrm{exp}\left(-\frac{(1-\alpha p)^2}{\alpha}\right)+62\alpha p 2^{-1/\alpha}.
\end{equation*}
 
\end{proof}

\section{Proof of Lemma \rf{lemma:contractive3}}

We need the following fact due to Chor \& Goldreich \cite{ChorG89}. 

\begin{proposition}[Convexity of distributions of bounded min-entropy]
	\label{proposition:convexity}
	Let $t$ be such that $2^t\in\{1,...,2^n\}$. A distribution $f:\Bn\rightarrow\mathbb{R}$ has min-entropy $t$ iff it is a convex combination of $t$-flat distributions $f_1,...,f_L$, i.e., $f = \lambda_1 f_1+...+\lambda_L f_L$ for some positive $\lambda_i$'s with $\lambda_1+...+\lambda_L=1$.
\end{proposition}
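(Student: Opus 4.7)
The plan is to recognize the nontrivial direction as a statement about the vertices of a polytope and invoke the fact that a bounded polytope is the convex hull of its vertices. The easy direction (``if'') is immediate: if $f=\sum_i\lambda_i f_i$ with each $f_i$ being $t$-flat, then $f(x)\leq \sum_i\lambda_i\,2^{-t}=2^{-t}$ at every point, so $f$ has min-entropy at least $t$.

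For the converse, I would introduce the polytope
\begin{equation*}
  P := \bigl\{g:\Bn\to\mathbb{R}\;\big|\;g(x)\geq 0,\; g(x)\leq 2^{-t}\text{ for all }x,\; \textstyle\sum_{x}g(x)=1\bigr\}.
\end{equation*}
Any distribution $f$ with $f(x)\leq 2^{-t}$ pointwise lies in $P$, so it is enough to show that every vertex of $P$ is a $t$-flat distribution; the standard fact that a bounded polytope equals the convex hull of its vertices (Krein--Milman in the finite-dimensional setting) then supplies the required $\lambda_i,f_i$.

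To characterize the vertices I would use a direct linear-programming argument. The polytope lives in $\mathbb{R}^{2^n}$ and is cut out by one equality and $2\cdot 2^n$ inequalities; at a vertex, $2^n$ linearly independent constraints must be tight, so after accounting for the equality we need $2^n-1$ tight inequalities. Because $g(x)\geq 0$ and $g(x)\leq 2^{-t}$ cannot be simultaneously tight at any single coordinate, this forces at most one coordinate to have a value in the open interval $(0,2^{-t})$. If $k$ coordinates equal $2^{-t}$ and one coordinate has value $\delta$, then $k\,2^{-t}+\delta=1$ gives $\delta=(2^t-k)\,2^{-t}$; since $2^t$ is a positive integer, $\delta$ must lie in $\{0,2^{-t}\}$, contradicting $0<\delta<2^{-t}$. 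Hence every vertex takes values only in $\{0,2^{-t}\}$ and is $t$-flat.

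The main obstacle is really just articulating the final integrality step cleanly: the hypothesis $2^t\in\{1,\dots,2^n\}$ is consumed in precisely this place, and without it the extremal points of $P$ could carry a single fractional mass, so decomposition into strictly $t$-flat distributions would fail. Everything else is mechanical convex-geometry bookkeeping.
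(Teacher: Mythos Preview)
Your argument is correct. Note, however, that the paper does not actually prove this proposition: it is quoted as a known fact due to Chor and Goldreich and then used as a black box in the proof of Lemma~\ref{lemma:contractive3}. So there is no in-paper proof to compare against.

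For context, the argument one usually sees (and the one closest in spirit to the original reference) is constructive rather than extremal: first write $f$ as a convex combination of uniform distributions by ``peeling horizontal slabs'' (subtract the smallest nonzero value times the indicator of the current support, record the resulting flat piece, and repeat on the residual), and then express each uniform distribution on a set of size $N\geq 2^t$ as the average, over all $\binom{N}{2^t}$ subsets of size $2^t$, of the uniform distribution on that subset. Your polytope-vertex route replaces this two-stage construction with a single extremality argument; the integrality hypothesis $2^t\in\{1,\dots,2^n\}$ enters exactly where you use it, to rule out the lone fractional coordinate at a vertex. Both arguments are short; yours has the advantage of making transparent why the hypothesis is necessary, while the constructive one gives an explicit decomposition (which the applications here do not need).

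One small remark on the ``if'' direction: as you implicitly note by writing ``min-entropy at least $t$'', a convex combination of $t$-flat distributions can have min-entropy strictly larger than $t$, so the ``iff'' in the proposition should really be read with ``min-entropy $\geq t$'' on the left. This is consistent with how the paper uses the statement.
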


\begin{proof} 
Assume $f$ is a $t$-flat distribution. Define $s:=|\{x:f(x)\not=0\}|$ and $g_f:=\lceil f \rceil$, i.e., $f$ rounded up point wise. The range of $g_f$ is $\B$. Applying Lemma \rf{lemma:contractive} and \rf{lemma:contractive2} and using the fact that $\widetilde{f}(S)=\frac{2^n}{2 s}\cdot\widehat{g_f}(S)$, $S\subseteq[n]$,
\begin{align*}
 \E_{S\sim\mu_p} (\widetilde{f}(S)^2) &= (2\,s)^{-2}\,4^{n}\,\E_{S\sim\mu_p} (\widehat{g_f}(S)^2)\leq (2\,s)^{-2}\,(1+2^{-\frac{1}{\alpha}+8})^{\alpha p n}\,s^{2(1-\alpha p)}\leq\\
&\leq 2^{-2}\,\exp\bigg((2^{-\frac{1}{\alpha}+8})\,\alpha p n\bigg)\,s^{-2\alpha p}.
\end{align*}
By Jensen's Inequality,
\begin{align*}
	\E_{S\sim\mu_p} (|\widetilde{f}(S)|) \leq \frac{1}{2}\,\exp\bigg(\frac{1}{2}\,(2^{-\frac{1}{\alpha}+8})\,\alpha p n\bigg)\,s^{-\alpha p}.
\end{align*}
Define $\alpha:=1/\log(512/\tilde{t})$. Note that $s=2^{t}=2^{\tilde{t}n}$ and $\alpha\leq \frac{1}{9}$. Thus,
\begin{align*}
	\E_{S\sim\mu_p} (|\widetilde{f}(S)|) \leq 2^{\alpha p n\tilde{t}\big(\frac{\log(\mathrm{e})}{4}- 1\big)-1}\leq \frac{1}{2} \sqrt{2}^{-\alpha p n \tilde{t}}.
\end{align*}

Let $f$ be a distribution of min-entropy $t$ now. Using the convexity of distributions of bounded min-entropy, Proposition \rf{proposition:convexity}, and the fact that the normalized Fourier transform is a linear functional
\begin{align*}
	\E_{S\sim\mu_p}(|\widetilde{f}(S)|) =\E_{S\sim\mu_p}\left(\left|\sum_{i=1}^L \lambda_i\, \widetilde{f}(S)\right|\right) \leq\sum_{i=1}^L \lambda_i\,\E_{S\sim\mu_p}(|\widetilde{f}(S)|)\leq \frac{1}{2}\sqrt{2}^{-\alpha p \tilde{t} n}.
\end{align*}
\end{proof}

\section{Proof of Lemma \ref{lemma:contractive4}}

\begin{proposition}[Kahn et al.\ \cite{KahnKL88}]
 \label{proposition:kkl}
 Let $f:\Bn\rightarrow\{-1,0,1\}$ and $0\leq\delta\leq 1$. Then,
 \begin{align*}
  \sum_{S\subseteq[n]} \delta^{|S|} \widehat{f}(S)^2 \leq \Pr_{x}(f(x)\not=0)^{\frac{2}{1+\delta}}.
 \end{align*}
\end{proposition}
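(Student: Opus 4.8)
The final statement, Proposition~\ref{proposition:kkl}, is the level-$k$ (KKL) inequality, and I would obtain it directly from the Bonami--Beckner hypercontractive inequality, which the paper already invokes as a known external tool (e.g.\ for Lemma~\ref{lemma:contractive4}).

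\emph{Step 1 (reduce to a noise operator).} Let $T_\rho$ be the noise operator on $\Bn$, normalized so that $\widehat{T_\rho g}(S)=\rho^{|S|}\,\widehat g(S)$ for every $g\colon\Bn\to\mathbb R$ and $S\subseteq[n]$. Since the characters $\chi_S(x)=(-1)^{\bigoplus_{i\in S}x_i}$ are orthonormal with respect to $\E_{x\sim\Bn}(\cdot)$, Plancherel gives, with $\rho=\sqrt\delta$,
\[
 \sum_{S\subseteq[n]}\delta^{|S|}\,\widehat f(S)^2=\sum_{S\subseteq[n]}(\sqrt\delta)^{2|S|}\,\widehat f(S)^2=\E_{x\sim\Bn}\big((T_{\sqrt\delta}f)(x)^2\big)=\|T_{\sqrt\delta}f\|_2^2 .
\]

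\emph{Step 2 (hypercontractivity).} Put $p:=1+\delta$. As $0\le\delta\le1$ we have $1\le p\le 2$ and $\sqrt\delta\le\sqrt{(p-1)/(2-1)}$, so Bonami--Beckner in the form $\|T_\rho g\|_q\le\|g\|_p$ valid for $1\le p\le q$ and $\rho\le\sqrt{(p-1)/(q-1)}$ (here $q=2$, $\rho=\sqrt\delta$) yields $\|T_{\sqrt\delta}f\|_2\le\|f\|_{1+\delta}$.

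\emph{Step 3 (evaluate the norm).} Because $f$ takes values in $\{-1,0,1\}$, pointwise $|f(x)|^{1+\delta}=|f(x)|=\mathbf{1}[f(x)\neq0]$, hence $\|f\|_{1+\delta}^{\,1+\delta}=\E_{x\sim\Bn}(|f(x)|^{1+\delta})=\Pr_{x}(f(x)\neq0)$, so $\|f\|_{1+\delta}^2=\Pr_{x}(f(x)\neq0)^{2/(1+\delta)}$. Chaining Steps 1--3 gives $\sum_S\delta^{|S|}\widehat f(S)^2\le\Pr_x(f(x)\neq0)^{2/(1+\delta)}$, as claimed. The boundary cases are consistent: at $\delta=1$ this is Parseval ($\sum_S\widehat f(S)^2=\E_x f(x)^2=\Pr_x(f(x)\neq0)$) with equality, and at $\delta=0$ it reduces to $\widehat f(\emptyset)^2=(\E_x f(x))^2\le(\E_x|f(x)|)^2=\Pr_x(f(x)\neq0)^2$.

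There is essentially no real obstacle here; the whole content lies in invoking hypercontractivity with the correct parameter triple $(p,q,\rho)=(1+\delta,2,\sqrt\delta)$ and in the observation that raising a $\{-1,0,1\}$-valued function to the power $1+\delta$ collapses it to its indicator, which is exactly what converts $\|f\|_{1+\delta}$ into a power of $\Pr_x(f(x)\neq0)$. A fully self-contained alternative would mimic the induction on $n$ from the proof of Lemma~\ref{lemma:contractive} --- split $f$ along the last coordinate and apply Minkowski's and H\"older's inequalities in the induction step --- but given that the paper uses Bonami--Beckner elsewhere, the three-line hypercontractive argument above is the natural route.
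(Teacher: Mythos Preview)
Your argument is correct. Note, however, that the paper does not actually prove this proposition: it is stated with attribution to Kahn, Kalai, and Linial \cite{KahnKL88} and then used as a black-box ingredient in the proof of Lemma~\ref{lemma:contractive4}. The \texttt{proof} environment that follows the proposition in the appendix is the proof of Lemma~\ref{lemma:contractive4}, not of the proposition itself. Your derivation via $\|T_{\sqrt\delta}f\|_2\le\|f\|_{1+\delta}$ together with the collapse $|f|^{1+\delta}=\mathbf 1[f\neq0]$ for $\{-1,0,1\}$-valued $f$ is the standard route to this inequality (and essentially the original KKL argument), so there is nothing to compare: you have supplied a valid proof where the paper simply cites one.
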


\begin{proof}
Assume $f$ is a $t$-flat distribution. Define $p:=\Pr_{x}(f(x)\not=0)$. Let $g_f:=\lceil f \rceil$, i.e., $f$ rounded up pointwise. Applying Proposition \ref{proposition:kkl} and using the fact that $\widetilde{f}(S)=\frac{1}{2 p}\cdot\widehat{g_f}(S)$, $S\subseteq[n]$,
\begin{align*}
	&\sum_{S\in {[n]\choose k}} \widetilde{f}(S)^2 = (2\,p)^{-2}\sum_{S\in {[n]\choose k}} \widetilde{g_f}(S)^2\leq  (2\,p)^{-2}\,\delta^{-k}\, p^{\frac{2}{1+\delta}}.
\end{align*}
We recall that for a point $r\in\mathbb{R}^{d}$, $\sum_{i=1}^{d} |r_i|\leq \sqrt{d\;\sum_{i=1}^{d} r_i^2}$. This implies
\begin{align*}
	\sum_{S\in {[n]\choose k}} |\widetilde{f}(S)| \leq \frac{1}{2}\,{n\choose k}^{1/2}\, \delta^{-k/2}\, p^{-\frac{\delta}{1+\delta}}.
\end{align*}
The claim for $t$-flat distributions follows since $p=2^{-(n-t)}$ and since $S$ is chosen uniformly at random from ${[n]\choose k}$. The generalization to distributions of bounded min-entropy follows then from Proposition \ref{proposition:convexity} and the linearity of the Fourier transform. We set $\delta=k/{n^\zeta}$ and use the estimation $|\mathcal{S}_k|\geq (n/k)^k$. Finally, we set $\delta=k/{n^\zeta}$ and use the estimation ${n\choose k}\geq (n/k)^k$.
\end{proof}

\section{Proof of Lemma \rf{lemma:1}}

\begin{proof} 
Define $p_i:=\Pr_{x\sim f_{i-1}}(h^*_i(x)=y_i)$ and $q_i:=\Pr_{x\sim f}(h^*_1(x)=y_1,...,h^*_i(x)=y_i)$. From Cond.\ \rf{property:1}, $(1-\eta)/2\leq p_i \leq (1+\eta)/2$ for $1\leq i\leq m$. In particular, $p_i\not=0$ and $q_i\not=0$ for $1\leq i\leq m$. Thus, 
\begin{align*}
	q_j &= p_{j}\,q_{j-1} = p_{j}\,p_{j-1}\,q_{j-2}	=...= p_{j}\cdot...\cdot p_{1}.
\end{align*}
The first claim follows. Define $q_0:=1$. By the triangle inequality, 
\begin{align*}
	|q_m - 2^{-m}| \leq \sum_{i=1}^m |q_i - \frac{q_{i-1}}{2} |\cdot 2^{-(m-i)}.
\end{align*}
Furthermore, $\frac{1}{q_{i-1}}\cdot |q_i - \frac{q_{i-1}}{2}| =  |p_{i} - \frac{1}{2}|$.
Thus,
\begin{align*}
	|q_m - 2^{-m}| &\leq  \sum_{i=1}^m \frac{|p_{i}-1/2|\cdot q_{i-1}}{2^{m-i}}
	\leq \frac{\eta}{2}\, \sum_{i=1}^m \frac{q_{i-1}}{2^{m-i}}\leq\\
	&\leq \frac{\eta}{2}\, \sum_{i=1}^m \frac{(1+\eta)^{i-1}\, 2^{-{(i-1)}}}{2^{m-i}} = \eta \, \sum_{i=1}^m \frac{(1+\eta)^{i-1}}{2^{m}} .
\end{align*}
Finally, 
\begin{align*}
	|q_m - 2^{-m}| \leq 2^{-m}\,\eta\,\sum_{i=1}^m (1+\eta)^{i-1}=2^{-m} \,((1+\eta)^m-1),
\end{align*}
where we used $\eta\,\sum_{i=1}^m (1+\eta)^{i-1}={(1+\eta)^m-1}$.
\end{proof}

\section{Proof of Lemma \ref{lemma:main}}

\begin{proof}
Let $f$ be a flat distribution of min-entropy $t$ with $t_0=\tilde{t}_0 n\leq t\leq n$. We define $\eta:=\frac{\eps}{2 m}$. We show that $\h$ satisfies Cond.\ \rf{property:1} with probability at least $(1-P)^m$. The induction is over $i=1,...,m$. For $i=1$, we need to show that $|\Pr_{x\sim f}(\h_1(x)=y_1) - {1}/{2}| \leq \eta$ holds with probability at least $1-P$. From Lemma \rf{lemma:biased}, $|\Pr_{x\sim f}(\h^*_1(x)=y_1) - {1}/{2}|=|\widetilde{f}(S^*_1)|$ where $S^*_1$ defines $\h^*_1$. By Markov's Inequality and Lemma \rf{lemma:contractive3}, $\Pr_{S\sim \mu_p}(|\widetilde{f}(S)|\geq\eta)\leq P$. Note that $1-P>0$. 
	
Assume the induction hypothesis holds for $i<m$. We condition on the fact that $(\h_1^*,...,\h_i^*)$ satisfy Cond.\ \rf{property:1}. By Lemma \rf{lemma:1} and observing that flat distributions are closed under conditioning we get that $f_i$ is a flat distribution. We need to show that $|\Pr_{x\sim f_{i}}(\h_{i+1}(x)=y_{i+1}) - {1}/{2}| \leq \eta$ holds with probability at least $1-P$. Again $|\Pr_{x\sim f_{i}}(\h^*_{i+1}(x)=y_{i+1}) - {1}/{2}| = |\widetilde{f}_{i}(S^*_{i+1})|$ where $S^*_{i+1}$ defines $\h^*_{i+1}$. We want to apply Lemma \rf{lemma:contractive3} again. We need to verify that the min-entropy of $f_i$ is not too small. Equivalently, $f_i(z)$ should not be too large for any $z\in\Bn$. By Lemma \rf{lemma:1},
\begin{align*}
	f_{i}(z) &= \Pr_{x\sim f}(x = z\;|\;\h_1(x)=y_1,...,\h_i(x)=y_i)\\
	         &= \Pr_{x\sim f}(x = z,\h_1(x)=y_1,...,\h_i(x)=y_i)\, \Pr_{x\sim f}(\h_1(x)=y_1,...,\h_i(x)=y_i)^{-1}\\
		 &\leq 2^{-t}\; 2^i\; (1-\eta)^{-i}\leq 2^{-t+i+2}.
\end{align*}
The min-entropy of $f_m$ is thus at least $t-i-2\geq t-(m-1)-2\geq (t_0+m+1)-m-1 = t_0$. Applying Lemma \rf{lemma:contractive3} finishes the proof of the claim. 

We showed that $\h$ satisfies Cond.\ \rf{property:1} with probability at least $(1-P)^m$. This implies that $\Pr_\h(|\Pr_{x\sim f}(\h^*(x)=y) - 2^{-m}|\leq \eps\,2^{-m})\geq (1-P)^m$ by Lemma \rf{lemma:1} and since $(1+\eta)^m-1\leq\eps$. This finishes the analysis of $\h$. The analysis for $\h^c$ is the same as for $\h$ but we use Lemma \ref{lemma:contractive4}.
\end{proof}

\section{Proof of Theorem \rf{theorem:algorithm}}

\begin{proof}

\noindent\textbf{Claim 2 (non-constant case).} Let $A$ be the solution set of $F$. Assume $A$ is non-empty and fix $l$. Define $B:=A\cap\{x:\h(x)=b\}$, $\h$ from the $l$-th iteration of \textsc{acount}.\\



\noindent\textbf{Case} $|A| 2^{-l-1} > 2^{n/\kappa}$. Define $f_A(x):=\frac{1}{|A|}$ if $x\in A$ and $0$ otherwise. By the Main Lemma 
\begin{equation*}
 \Pr_{\h,b}\left(\left|\Pr_{x\sim f_A}(\h(x)=b) - 2^{-l}\right|\leq \eps\,2^{-l} \right)\geq \frac{7}{8},
\end{equation*}
i.e., 
\begin{equation}
\label{eq:12}
||B|-|A|\cdot 2^{-l}|\leq |A|\cdot \eps\cdot 2^{-l}.
\end{equation}
We have to calculate $P$ to see this. Set $\eps:=\frac{1}{2}$. First, 
\begin{equation*}
 p=\frac{k+1}{2n}=\frac{2\,\kappa\,\log(512\kappa)\,\log(16 n)}{n}.
\end{equation*}
Thus,
\begin{equation*}
 P = \frac{l}{\eps}\,2^{-\log(16 n^2)\,\kappa\,\log(512\kappa)\,\tilde{t}/\log(512/\tilde{t})}\leq \frac{1}{8n}
\end{equation*}
with $\tilde{t}=\log(|A|)/n$ in our setting and since $\kappa\,\log(512\kappa)\,\tilde{t}/\log(512/\tilde{t})\geq 1$. The latter holds since $\tilde{t}\geq \frac{1}{\kappa}$ by assumption. By the Main Lemma ($\tilde{t}_0=\frac{1}{\kappa}$ and $l\leq n- \frac{n}{\kappa} - 1$ since $\frac{n}{\kappa}\leq \log(|A|)-l-1$ by assumption)
\begin{equation*}
 (1-P)^n \geq \bigg(1-\frac{1}{8n}\bigg)^n\geq 7/8.
\end{equation*}
We estimate the probability that $\h$ is $k$-local next. Let $|V_i|$ denote the number of variables $\h_i$ depends on. By Chernoff's Bound
\begin{equation*}
 \Pr_{\h_i}(|V_i|\geq 2 p n)=\Pr(|V_i|\geq k+1)\leq (\mathrm{e}/4)^{p n}\leq \frac{1}{16n}.
\end{equation*}
Thus,
\begin{equation*}
 \Pr_\h(\forall i:\,|V_i| \leq k)\geq \bigg(1-\frac{1}{16n}\bigg)^n\geq 7/8.
\end{equation*}
The joint probability that Eq.\ \rf{eq:12} holds and $\h$ is $k$-local is thus at least $2\cdot\frac{7}{8}-1=\frac{3}{4}$. The inner loop amplifies this probability to $1-1/n$.\\

\noindent\textbf{Case} $|A| 2^{-l+3}<1$. Let $X=X(\h,b)$ be $|B|$. Let $X_x$ indicate whether $\h(x)=b$. Then, $\E(X_x)=2^{-l}$ and thus $\E(X)=|A| 2^{-l}$. By Markov's Inequality, $$\Pr_{\h,b}(X < 1)\geq 7/8.$$
This implies that the joint probability that $B=\{\}$ and $\h$ is $k$-local is at least $\frac{3}{4}$. The inner loop amplifies this probability to $1-1/n$.\\

\noindent Eq.\ \rf{eq:12} implies $B\not=\{\}$ if $A\not=\{\}$. Assume the algorithm stops at $l=l_0\in[n]$. It outputs $2^{l_0-1}$. From the first case, we get that $l_0 \geq \log(|A|) - \frac{n}{\kappa}-1$ with probability (w.p.) at least $(1-1/n)^{n+1}$ because the algorithm continues if $l_0 < \log(|A|) - \frac{n}{\kappa}-1$ w.p.\ at least $1-1/n$ per step. From the second case, $l_0 \leq \log(|A|)+3$ w.p.\ at least $1-1/n$ because the algorithm stops if $l_0 > \log(|A|) + 3$ w.p.\ at least $1-1/n$.\\

We do not know how the algorithm behaves in the range $\Omega(1)\leq\log(|A|)\leq \frac{n}{\kappa}+O(1)$. This causes the approximation error. We can overcome this problem using a simply technique to prove the second claim.\\

\noindent\textbf{Claim 2 (constant case).} This analysis remains the same as in the non-constant case. We are just have to show $(1-Q)^m\geq 1/4$ which follows from $Q\leq \frac{1}{m}$ and $\eps:=\frac{1}{2}$, $\zeta:=1-\frac{4}{k}$, $t_0:=n-\frac{\log(n)}{k}\,n^{1-4/k}$.\\


\noindent\textbf{Claim 1.} We note that we can count the number of solutions exactly in time $\OO(2^{(c+\delta)n})$ if $|A|\leq 2^{\delta n}$. This follows from the self-reducibility of SAT and the prerequisites. Set $\delta:=\frac{1}{\kappa}$. If $p\leq \frac{\delta}{2}$ we know that $\h$ is with high probability $(\delta n)$-local. We can encode a $(\delta n)$-local hash function in time $\OO(2^{\delta n})$ as a CNF. We adapt $\textsc{acount}$ in the following way: If the input CNF $F$ has more than $\lfloor 2^{\delta n}\rfloor$ solutions we construct $\h$ for $l=1,...,\lceil(1-\delta)n\rceil$ and continue as long as $F\wedge G$ has at least $\lfloor 2^{\delta n}\rfloor$ solutions. We output the exact number of solutions of $F\wedge G$ times $2^l$. The analysis goes as follows. We observe that as soon as $|A|\,2^{-l}<\lfloor 2^{\delta n}\rfloor$ we know it and the approximation error is thus determined by Eq.\ \ref{eq:12}. Rewriting Eq.\ \ref{eq:12} we get 
\begin{equation*}
(1-\eps)\,|A| \leq |B|\,2^{l}\leq (1+\eps)\,|A|.
\end{equation*}
For some $p=O(\delta)$ and $\tilde{t}\geq \delta$ we get from the Main Lemma
\begin{equation*}
 P = 2^{-O(\delta^2 n / \log(1/\delta)-\log(\frac{1}{\eps}))}
\end{equation*}
which is small enough for some $\log(\frac{1}{\eps})=\Omega(\delta^2 n / \log(1/\delta))$. 
\end{proof}

\end{document}